%% file: main.tex
\documentclass[twocolumn]{autart}

\usepackage[T1]{fontenc}
\usepackage{graphicx}
\usepackage{subfiles}
\usepackage{amsmath}
\usepackage{amsfonts}
\usepackage{amssymb}
\usepackage{mathtools}
\usepackage{physics}
\usepackage{url}
\usepackage{booktabs}
\usepackage{multirow}
\usepackage{tablefootnote}
\usepackage{xcolor}
\usepackage{tabularx}

\newtheorem{cond}[thm]{Condition}

\newcommand{\placeqed}{\nobreak\hfill \ensuremath{\blacksquare}}

\begin{document}

\begin{frontmatter}

\title{Data-driven augmentation of first-principles models under constraint-free well-posedness and stability guarantees\thanksref{footnoteinfo}}

\thanks[footnoteinfo]{This paper was not presented at any IFAC meeting. Corresponding author B.~Gy\"or\"ok. This project has been supported by the Air Force Office of Scientific Research under award number FA8655-23-1-7061, by The MathWorks Inc., and by the European Union (Horizon Europe, ERC, COMPLETE, 101075836). Views and opinions expressed are however those of the author(s) only and do not necessarily reflect those of the European Union or the European Research Council Executive Agency. Neither the European Union nor the other granting authorities can be held responsible for them.}

\author[SZTAKI]{Bendeg\'uz Gy\"or\"ok}\ead{gyorokbende@sztaki.hu}, %
\author[TUe]{Roel Drenth}\ead{r.drenth@tue.nl}, %
\author[TUe,UPenn]{Chris Verhoek}\ead{c.verhoek@tue.nl}, %
\author[SZTAKI]{Tam\'as P\'eni}\ead{peni@sztaki.hu}, %
\author[TUe]{Maarten Schoukens}\ead{m.schoukens@tue.nl}, %
\author[SZTAKI,TUe]{Roland T\'oth}\ead{r.toth@tue.nl}

\address[SZTAKI]{Systems and Control Laboratory, HUN-REN Institute for Computer Science and Control, Budapest, Hungary}%
\address[TUe]{Control Systems Group, Eindhoven University of Technology, Eindhoven, The Netherlands}%
\address[UPenn]{Department of Electrical and Systems Engineering, University of Pennsylvania, Philadelphia, United States of America}%

\begin{keyword}                           
System identification; Model augmentation; Stability, Physics-based learning.             
\end{keyword}                             

\begin{abstract}
The integration of first-principles models with learning-based components, i.e., model augmentation, has gained increasing attention, as it offers higher model accuracy and faster convergence properties compared to black-box approaches, while generating physically interpretable models. Recently, a unified formulation has been proposed that generalizes existing model augmentation structures, utilizing linear fractional representations (LFRs). However, several potential benefits of the approach remain underexplored. In this work, we address three key limitations. First, the added flexibility of LFRs also introduces possible algebraic loops, i.e., a problem of well-posedness. To address this challenge, we propose a constraint-free direct parametrization of the model structure with a well-posedness guarantee. Second, we introduce a constraint-free parametrization that ensures stability of the overall model augmentation structure via contraction. Third, we adopt an efficient identification pipeline capable of handling non-smooth cost functions, such as group-lasso regularization, which facilitates automatic model order selection and discovery of the required augmentation configuration. These contributions are demonstrated on various simulation and benchmark identification examples.
\end{abstract}

\end{frontmatter}

\subfile{sections/1_introduction.tex}
\subfile{sections/2_LFR_augmentation.tex}
\subfile{sections/3_WP_parametrization.tex}
\subfile{sections/4_stability.tex}
\subfile{sections/5_ident_algorithm.tex}
\subfile{sections/6_num_examples.tex}
\subfile{sections/7_conclusion.tex}
\subfile{sections/appendix.tex}
\bibliographystyle{abbrv}        
\bibliography{literature}           

\end{document}

%% file: sections/1_introduction.tex
\section{Introduction}\label{sec:intro}\vspace{-6pt}
Combining first-principles models with learning-based components has emerged as a promising research direction, largely due to the inherent limitations of purely data-driven approaches. One of the most important drawbacks of black-box models is that prior physical insight of the system is disregarded, often requiring significant effort to rediscover relations that are already well understood due to physics-based knowledge. 
Although deep-learning-based identification techniques have shown the ability to achieve extraordinary model accuracy~\cite{beintema_deep_2023} and high generalization capabilities~\cite{forgione_system_2023}, their opaque nature typically limits their applicability in contexts where interpretability and physical insight are essential, e.g., in the control design process and safety-critical applications~\cite{ljung_perspectives_2010}.

A traditional method for addressing this challenge is (light) gray-box modeling~\cite{bohlin_practical_2006}. Light gray-box models utilize a fixed physics-based structure and apply data-driven parameter estimation, often relying on statistical learning algorithms~\cite{aghmasheh_gray_2017}, but deep-learning-based techniques are also used~\cite{takeishi_deep_2023}. A more recent approach for incorporating \emph{first-principles} (FP) knowledge into data-driven identification is based on \emph{physics-informed neural networks} (PINNs) and \emph{physics-guided neural networks}~(PGNNs), introduced in~\cite{raissi_physics-informed_2019} and~\cite{daw_physics-guided_2022}. In both branches of modeling approaches, physical knowledge is typically incorporated through additional terms in the cost function, penalizing when neural networks violate prescribed physical laws during training~\cite{karpatne_theory-guided_2017}.

Gray-box models and PINNs can be viewed as two extremes of the same spectrum. Light gray-box models rely on strictly physics-based structures with unknown parameters to be estimated; on the other hand, PINNs offer a flexible black-box representation that is constrained only by selected physical laws. An alternative, promising direction of utilizing physical knowledge in nonlinear system identification is \emph{model augmentation}, which integrates FP models and learning components in a structural manner. Such augmented models not only result in physically interpretable structures, but faster convergence and improved model accuracy can be typically achieved compared to purely black-box learning methods~\cite{hoekstra_learning-based_ECC_2025}. Furthermore, existing approaches can be incorporated into the model augmentation framework, e.g., physics-based loss functions can also be applied~\cite{daw_physics-guided_2022}, and physical parameters can be co-estimated with the parameters of the learning component~\cite{psichogios_hybrid_1992}. 

However, the combination of the FP part and the data-driven part is not intuitive, as many different model augmentation structures are available in the literature, e.g., see~\cite{sun_comprehensive_2020,gotte_composed_2022}. To unify these methods, a general formulation has been proposed based on \emph{Linear Fractional Representations}~(LFRs) in~\cite{hoekstra_learning-based_ECC_2025} with binary selection matrices connecting the baseline and learning components. The (binary) values of the interconnection weights can be chosen, i.e., fixed, so that a wide range of augmentation structures can be represented. However, the user is still burdened with choosing the right augmentation structure. Instead, an alternative version of the LFR-based model augmentation structure has been proposed in~\cite{hoekstra_learning-based_2025}, utilizing continuous decision variables in the LFR matrix; hence, the final interconnection between the baseline and learning components is formed during model learning. The additionally introduced optimization variables lead to a highly flexible model structure, unifying the advantages of first-principles-based modeling and black-box identification. However, the price for the added generality is that an algebraic loop may appear in the model structure, i.e., the challenge of well-posedness arises. In general, there is no guarantee that a unique solution exists for the algebraic loop, which can ultimately lead to solver difficulties during model training. In~\cite{hoekstra_learning-based_2025}, structural constraints have been derived to achieve well-posed structures by triangularization of the feedback matrix. Although providing good experimental results, these constraints reduce the flexibility of the LFR-based formulation regarding the range of model augmentation structures that can be represented. 
Instead, we derive a direct model parametrization that guarantees well-posedness by construction, while also enabling the use of unconstrained gradient-based optimization algorithms, e.g., Adam \cite{kingma_adam_2015}.

Apart from the challenge of well-posedness, learning models with guaranteed stability properties is also a frequent problem in system identification. In general, identified models can be unstable, even if the data-generating system is stable. For linear system identification, several extensions are available for subspace methods with guaranteed stability, 
see, e.g.,~\cite{lacy_subspace_2003}. Stability plays an even more crucial role in deep learning-based nonlinear system identification. 
Due to the flexible nature of ANNs, \emph{Recurrent Neural Networks}~(RNNs), such as ANN-SS models, can be challenging to learn. During optimization, models can become unstable, causing the optimization to fail. 
This has motivated various research directions in recent years to develop ANN-based nonlinear system identification methods with stability guarantees. Most methods promote the stability property of RNNs either by hard constraints or regularization~\cite{terzi_learning_2021,de_carli_infinity-norm-based_2025}. Another promising direction for black-box ANN-SS models is based on direct parametrizations that guarantee stability by design, and hence allow for the use of unconstrained optimization~\cite{revay_recurrent_2024}. In this work, we develop such a parametrization approach for the LFR-based model augmentation structure while utilizing a highly efficient automatic differentiation tool, namely the JAX library~\cite{bradbury_jax_2018}, to facilitate rapid model learning.


The existing tools for guaranteeing well-posedness and stability of learned dynamic models are developed for black-box model structures. Extending these techniques to LFR-based interconnections of black-box and learning components is not straightforward; this motivates the proposed methods in this paper. The main contributions of this work are:\vspace{-2mm}
\begin{enumerate}
    \item We provide a direct, constraint-free parametrization which guarantees well-posedness of LFR-based model augmentation structures; moreover, we introduce an iterative approach for finding the fixed-point required for evaluating LFR-based models with theoretically guaranteed convergence.
    \item We propose a constraint-free parametrization for contracting LFR-based model augmentation to guarantee stability of the learned models.
    \item We adapt an identification pipeline based on~\cite{bemporad_l-bfgs-b_2025} with~$\ell_1$ and group-lasso penalties to enable model structure selection and discovery of the needed augmentation configuration.
    \item We demonstrate the effectiveness of our proposed model augmentation approach on various identification examples and benchmark problems.
\end{enumerate}\vspace{-2mm}
The paper is organized as follows: Section~\ref{sec:Augm_intro} introduces the model augmentation problem and LFR-based augmentation structure. It is followed by the proposed parametrization with the well-posedness guarantee in Section~\ref{sec:WellPosedness}, and the direct parametrization of the contracting LFR-based model augmentation structure in Section~\ref{sec:stable_param}. Then, Section~\ref{sec:ident_algorithm} presents the proposed identification pipeline with various regularization options, and the efficiency of the introduced model parametrization and estimation algorithm is demonstrated by multiple numerical experiments in Section~\ref{sec:num_examples}. Finally, conclusions are drawn in Section~\ref{sec:conclusion}.\vspace{-12pt}

%% file: sections/2_LFR_augmentation.tex
\section{LFR-based model augmentation setting}\label{sec:Augm_intro}\vspace{-6pt}
\subsection{Model augmentation problem statement}\label{sec:problem-statement}\vspace{-6pt}
Consider the dynamics of the data-generating system represented by the following \emph{discrete-time} (DT) nonlinear \emph{state-space} (SS) form:
\begin{subequations}
\label{eq:data-gen-sys}
{\setlength{\abovedisplayskip}{6pt}
 \setlength{\belowdisplayskip}{6pt}
\begin{align}
    x(k+1)&=f\left(x(k), u(k)\right),\\
    y(k)&=h\left(x(k), u(k)\right) + e(k),\label{eq:DT-y}
\end{align}
}\noindent
\end{subequations}
where $k \in \mathbb{Z}$ is the discrete time index, $x(k)\in\mathbb{R}^{n_\mathrm{x}}$ is the state, $u(k)\in\mathbb{R}^{n_\mathrm{u}}$ is the control input, $y(k)\in\mathbb{R}^{n_\mathrm{y}}$ is the measured output, while $f:\mathbb{R}^{n_\mathrm{x}} \times \mathbb{R}^{n_\mathrm{u}} \rightarrow \mathbb{R}^{n_\mathrm{x}}$ and $h:\mathbb{R}^{n_\mathrm{x}} \rightarrow \mathbb{R}^{n_\mathrm{y}}$ are possibly nonlinear functions. In~\eqref{eq:DT-y},~$e(k)$ represents the measurement noise generated by an i.i.d. white noise process with finite variance.

We assume that, based on prior knowledge and physical insight, a baseline model is available that describes some of the dominant dynamic relations of the system as
\begin{subequations}
\label{eqs:fp_model}
{\setlength{\abovedisplayskip}{6pt}
 \setlength{\belowdisplayskip}{6pt}
\begin{align}
    x_\mathrm{b}(k+1)&=f^\mathrm{FP}_{\theta_\mathrm{b}}(x_\mathrm{b}(k),u(k)),\\
    \hat{y}_\mathrm{b}(k)&=h^\mathrm{FP}_{\theta_\mathrm{b}}(x_\mathrm{b}(k), u(k)),
\end{align}
}\noindent
\end{subequations}
where $x_\mathrm{b}(k)\in\mathbb{R}^{n_\mathrm{x_b}}$ is the model state (or baseline state), $\hat{y}_\mathrm{b}(k)\in\mathbb{R}^{n_\mathrm{y}}$ is the baseline model output, $f^\mathrm{FP}_{\theta_\mathrm{b}}:\mathbb{R}^{n_{\mathrm{x_b}}} \times \mathbb{R}^{n_\mathrm{u}} \rightarrow \mathbb{R}^{n_{\mathrm{x_b}}}$ is the FP state transition function, and $h^\mathrm{FP}_{\theta_\mathrm{b}}:\mathbb{R}^{n_{\mathrm{x_b}}} \times \mathbb{R}^{n_\mathrm{u}} \rightarrow \mathbb{R}^{n_\mathrm{y}}$ is the output function, that both depend on $\theta_\mathrm{b}\in\mathbb{R}^{n_{\theta_\mathrm{b}}}$ physical (baseline) parameters, for which we assume $\theta_\mathrm{b}^0$ nominal values are known.

FP models are often subject to large amounts of uncertainty and unmodeled dynamics, which can be the main causes of model errors. To compensate for these effects, deep-learning-based model augmentation is introduced to enhance the model accuracy while still producing interpretable models. 
First, to capture unmodeled dynamic components, such as actuator dynamics and flexible modes, additional states are often required. Hence, the so-called \emph{augmentation states} are introduced in the following structure:
{\setlength{\abovedisplayskip}{6pt}
 \setlength{\belowdisplayskip}{6pt}
\begin{equation}
    \hat{x}(k)=\begin{bmatrix}
        x_\mathrm{b}^\top(k) & x_\mathrm{a}^\top(k)
    \end{bmatrix}^\top,
\end{equation}
}\noindent
where $x_\mathrm{a}(k) \in \mathbb{R}^{n_{\mathrm{x_a}}}$ denotes the additional state variables. Now $\hat{x}(k)\in\mathbb{R}^{n_{\mathrm{x_b}}+n_{\mathrm{x_a}}}$ stands for the combined model state of the augmentation structure. For systems where the unmodeled terms only affect the baseline model as missing static gains or nonlinearities, utilizing augmented states causes over-parametrization. Hence, it is a modeling decision when to use augmented states in the structure. In case $\mathrm{dim}\,x_\mathrm{a}=0$, we talk about \emph{static} augmentation, and when $\mathrm{dim}\,x_\mathrm{a}>0$, we call the corresponding model structure  \emph{dynamic} augmentation.

Model augmentation structurally combines a given first-principles model with learning components. 
A common practice is to jointly optimize the baseline model parameters $\theta_\mathrm{b}$ and the learning component parameters $\theta_\mathrm{a}\in\mathbb{R}^{n_{\theta_\mathrm{a}}}$ to achieve the best possible data-fit while tuning $\theta_\mathrm{b}$ as close to their physically true values as possible. Then, based on the \emph{Input-Output} (IO) data sequence $\mathcal{D}_N=\left\{\left(u_i,y_i\right)\right\}_{i=0}^{N-1}$ generated by~\eqref{eq:data-gen-sys}, the aim of model augmentation is to identify a model in the form of 
{\setlength{\abovedisplayskip}{6pt}
 \setlength{\belowdisplayskip}{6pt}
\begin{equation}
     \begin{bmatrix}
       \hat{x}^\top(k+1) & \hat{y}^\top(k)
    \end{bmatrix}^\top = \phi_\theta(\hat{x}(k), u(k)),
\end{equation}
}\noindent
where $\hat{y}(k)\in\mathbb{R}^{n_\mathrm{y}}$ is the output of the augmented model, $\theta=\mathrm{vec}\left(\theta_\mathrm{b},\theta_\mathrm{a}\right)$ combines the baseline and learning parameters, and  $\phi: \mathbb{R}^{n_{\mathrm{x_b}}+n_{\mathrm{x_a}}} \times \mathbb{R}^{n_\mathrm{u}} \rightarrow \mathbb{R}^{n_{\mathrm{x_b}}+n_{\mathrm{x_a}}} \times \mathbb{R}^{n_\mathrm{y}}$ determines both the state transition and the output map of the augmented model. The exact structure of $\phi_\theta$ depends on the applied model augmentation structure, e.g., the FP and learning components can be connected additively (parallel)~\cite{sun_comprehensive_2020}, multiplicatively (series)~\cite{gotte_composed_2022}, or various hybrid structures~\cite{hoekstra_learning-based_2025} are also available.\vspace{-6pt}

\subsection{LFR-based model augmentation structure}\label{sec:LFR-structure}\vspace{-6pt}
While a wide range of augmentation structures exists, selecting the most effective formulation is not straightforward, as it depends on both the data-generating system and the approximated baseline model. This challenge is addressed in~\cite{hoekstra_learning-based_2025}, where a generalized model augmentation structure based on LFRs is proposed with a parametrized interconnection between the baseline and learning components, allowing the structure itself to be tuned during model learning. Furthermore, another advantage of the LFR-based formulation is that there are various established control techniques available for LFRs (of uncertain systems), e.g.,~\cite{elghaoui_control_1996}. Thus, a significant advantage of the proposed structure is that the resulting models can be used directly for control design.

First, we introduce the following notation to denote the first-principles-based and the learning components, as
{\setlength{\abovedisplayskip}{6pt}
 \setlength{\belowdisplayskip}{6pt}
\begin{align}
    \phi^\mathrm{FP}_{\theta_\mathrm{b}}(z_\mathrm{b}(k)) &= \begin{bmatrix}
        f^\mathrm{FP}_{\theta_\mathrm{b}}(z_\mathrm{b}(k))\\
        h^\mathrm{FP}_{\theta_\mathrm{b}}(z_\mathrm{b}(k))
    \end{bmatrix},\\ 
    \phi^\mathrm{ANN}_{\theta_\mathrm{a}}(z_\mathrm{a}(k)) &= f^\mathrm{ANN}_{\theta_\mathrm{a}}(z_\mathrm{a}(k)),
\end{align}
}\noindent
where $f^\mathrm{ANN}$ is implemented as a fully-connected feedforward neural network, and $\theta_\mathrm{a}$ is collecting the network parameters such as weight and bias values, while $z_\mathrm{a}\in\mathbb{R}^{n_{\mathrm{z_a}}}$, $z_\mathrm{b}\in\mathbb{R}^{n_{\mathrm{x_b}}+n_\mathrm{u}}$ are latent variables, introduced for the LFR-based structure. The initial assumption is that $z_\mathrm{b}(k) = \mathrm{vec}(x_\mathrm{b}(k),u(k))$. Furthermore, latent variables $w_\mathrm{b}(k)\in\mathbb{R}^{n_{\mathrm{x_b}}+n_\mathrm{y}}$, and $w_\mathrm{a}(k)\in\mathbb{R}^{n_{\mathrm{w_a}}}$ are also introduced, and can be expressed as a function of the baseline and learning components, respectively:
{\setlength{\abovedisplayskip}{6pt}
 \setlength{\belowdisplayskip}{6pt}
\begin{equation}
    w(k) = \begin{bmatrix}
        w_\mathrm{b}(k)\\ w_\mathrm{a}(k)
    \end{bmatrix} = \underbrace{\begin{bmatrix}
        \phi^\mathrm{FP}_{\theta_\mathrm{b}}(z_\mathrm{b}(k))\\
        \phi^\mathrm{ANN}_{\theta_\mathrm{a}}(z_\mathrm{a}(k))
    \end{bmatrix}}_{\phi^\mathrm{NL}(z(k))},
\end{equation}
}\noindent
where $\phi^\mathrm{NL}$ contains the baseline and learning components. The latent variables $z_\mathrm{b}(k)$, $z_\mathrm{a}(k)$ are expressed as
{\setlength{\abovedisplayskip}{6pt}
 \setlength{\belowdisplayskip}{6pt}
\begin{multline}\label{eq:z(k)}
    \underbrace{\begin{bmatrix}
        z_\mathrm{b}(k)\\
        z_\mathrm{a}(k)
    \end{bmatrix}}_{z(k)} = \underbrace{\begin{bmatrix}
        C_\mathrm{z}^\mathrm{bb} & C_\mathrm{z}^\mathrm{ba}\\
        C_\mathrm{z}^\mathrm{ab} & C_\mathrm{z}^\mathrm{aa}
    \end{bmatrix}}_{C_\mathrm{z}} \underbrace{\begin{bmatrix}
        x_\mathrm{b}(k)\\ x_\mathrm{a}(k)
    \end{bmatrix}}_{\hat{x}(k)} + \underbrace{\begin{bmatrix}
        D_\mathrm{zu}^\mathrm{b}\\D_\mathrm{zu}^\mathrm{a}
    \end{bmatrix}}_{D_\mathrm{zu}} u(k) +\\
    \underbrace{\begin{bmatrix}
        D_\mathrm{zw}^\mathrm{bb} & D_\mathrm{zw}^\mathrm{ba}\\
        D_\mathrm{zw}^\mathrm{ab} & D_\mathrm{zw}^\mathrm{aa}
    \end{bmatrix}}_{D_\mathrm{zw}} \underbrace{\begin{bmatrix}
        w_\mathrm{b}(k)\\ w_\mathrm{a}(k)
    \end{bmatrix}}_{w(k)},
\end{multline}
}\noindent
where all elements of $C_\mathrm{z}$, $D_\mathrm{zu}$, and $D_\mathrm{zw}$ are optimized during model learning but can be handled separately, e.g., for parameter initialization purposes. The state transition and output map are expressed as
\begin{subequations}
{\setlength{\abovedisplayskip}{6pt}
 \setlength{\belowdisplayskip}{0pt}
\begin{multline}\label{eq:x(k+1)}
    \underbrace{\begin{bmatrix}
        x_\mathrm{b}(k+1)\\ x_\mathrm{a}(k+1)
    \end{bmatrix}}_{\hat{x}(k+1)} = \underbrace{\begin{bmatrix}
        A^\mathrm{bb} & A^\mathrm{ba}\\
        A^\mathrm{ab} & A^\mathrm{aa}
    \end{bmatrix}}_{A} \underbrace{\begin{bmatrix}
        x_\mathrm{b}(k)\\ x_\mathrm{a}(k)
    \end{bmatrix}}_{\hat{x}(k)} + \underbrace{\begin{bmatrix}
        B_\mathrm{u}^\mathrm{b}\\ B_\mathrm{u}^\mathrm{a}
    \end{bmatrix}}_{B_\mathrm{u}} u(k) +\\
     \underbrace{\begin{bmatrix}
        B_\mathrm{w}^\mathrm{bb} & B_\mathrm{w}^\mathrm{ba}\\
        B_\mathrm{w}^\mathrm{ab} & B_\mathrm{w}^\mathrm{aa}
    \end{bmatrix}}_{B_\mathrm{w}} \underbrace{\begin{bmatrix}
        w_\mathrm{b}(k)\\ w_\mathrm{a}(k)
    \end{bmatrix}}_{w(k)},
\end{multline}}
{\setlength{\abovedisplayskip}{0pt}
 \setlength{\belowdisplayskip}{6pt}
\begin{multline}\label{eq:y(k)}
    \hat{y}(k) = \underbrace{\begin{bmatrix}
        C_\mathrm{y}^\mathrm{b} & C_\mathrm{y}^\mathrm{a}
    \end{bmatrix}}_{C_\mathrm{y}} \underbrace{\begin{bmatrix}
        x_\mathrm{b}(k)\\ x_\mathrm{a}(k)
    \end{bmatrix}}_{\hat{x}(k)} + D_\mathrm{yu} u(k) +\\ \underbrace{\begin{bmatrix}
        D_\mathrm{yw}^\mathrm{b} & D_\mathrm{yw}^\mathrm{a}
    \end{bmatrix}}_{D_\mathrm{yw}} \underbrace{\begin{bmatrix}
        w_\mathrm{b}(k)\\ w_\mathrm{a}(k)
    \end{bmatrix}}_{w(k)},
\end{multline}
}\noindent
\end{subequations}
where $A$, $B_\mathrm{u}$, $C_\mathrm{y}$, and $D_\mathrm{yu}$ matrices represent the linear parts of the unmodeled dynamics, the baseline part is introduced through $B_\mathrm{w}^\mathrm{bb}$, $B_\mathrm{w}^\mathrm{ba}$, and $D_\mathrm{yw}^\mathrm{b}$, while the nonlinear learning component terms are considered with $B_\mathrm{w}^\mathrm{ab}$, $B_\mathrm{w}^\mathrm{aa}$, and $D_\mathrm{yw}^\mathrm{a}$. Finally, the LFR-based structure can be expressed in a compact form, as
\begin{subequations}\label{eqs:general_LFR}
\begin{align}
    \begin{bmatrix}
        \hat{x}(k+1)\\ \hat{y}(k)\\ z(k)
    \end{bmatrix} &= \underbrace{\begin{bmatrix}
        A & B_\mathrm{u} & B_\mathrm{w}\\
        C_\mathrm{y} & D_\mathrm{yu} & D_\mathrm{yw}\\
        C_\mathrm{z} & D_\mathrm{zu} & D_\mathrm{zw}
    \end{bmatrix}}_{W_{\theta_\mathrm{LFR}}} \begin{bmatrix}
        \hat{x}(k)\\ u(k)\\ w(k)
    \end{bmatrix},\\
    w(k) &= \begin{bmatrix}
        \phi^\mathrm{FP}_{\theta_\mathrm{b}}(z_\mathrm{b}(k))\\
        \phi^\mathrm{ANN}_{\theta_\mathrm{a}}(z_\mathrm{a}(k))
    \end{bmatrix},
\end{align}
\end{subequations}
where $W_{\theta_\mathrm{LFR}}$ is the LFR matrix, with the optimization parameters describing the interconnection structure collected into the $\theta_\mathrm{LFR}$ parameter vector, i.e., $\theta_\mathrm{LFR}=\mathrm{vec}(A, B_\mathrm{u},\dots,D_\mathrm{zw})$. All sub-matrices in $W_{\theta_\mathrm{LFR}}$ are tuned during model training, hence the generality of the approach. All available model augmentation structures can be represented by assigning specific values to the LFR matrices, as shown in~\cite{hoekstra_learning-based_2025}. The LFR-based model augmentation structure is illustrated in Fig.~\ref{fig:LFR_augm_struct}.\vspace{-6pt}
\begin{figure}
    \centering
    \includegraphics[scale=0.85]{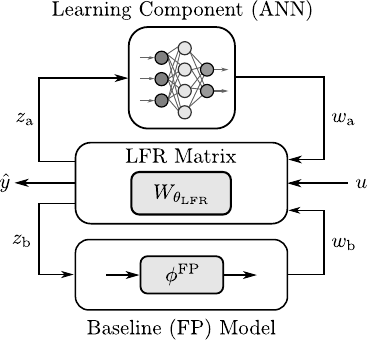}\vspace{-6pt}
    \caption{LFR-based model augmentation structure.}
    \label{fig:LFR_augm_struct}
\end{figure}

\subsection{Identification setting}\vspace{-6pt}
The standard approach for estimating \emph{Output-Error}~(OE) model structures such as \eqref{eqs:general_LFR} from a data set $\mathcal{D}_N$ is to minimize the simulation error, expressed as
\begin{equation}\label{eq:cost-fun}
    V_{\mathcal{D}_N}(\theta, \hat{x}_0) = \dfrac{1}{N} \sum_{k=0}^{N-1} \left\|y(k) - \hat{y}(k)\right\|_2^2,
\end{equation}
where $y(k)$ denotes the measured output and $\hat{y}(k)$ the simulated output obtained from \eqref{eqs:general_LFR}. Simulation error minimization is a well-established method in nonlinear system identification and is known to produce highly accurate models~\cite{schoukens_nonlinear_2019}. However, gradient-based algorithms often encounter stability issues when minimizing~\eqref{eq:cost-fun}, see~\cite{ribeiro_smoothness_2020}. To address this challenge,~\cite{hoekstra_learning-based_2025} adapted the truncated objective function (see, e.g.,~\cite{beintema_deep_2023}) for estimating model augmentation structures, and employed an encoder network to estimate the initial state of each truncated subsequence. While this strategy achieved excellent accuracy across a range of identification problems~\cite{hoekstra_learning-based_ECC_2025,gyorok_orthogonal_2025}, the selection of the truncation length remains a user-dependent choice that can influence model performance. Moreover, the use of multiple overlapping subsequences often leads to increased computational time in practice. 
In contrast, we follow a more conventional approach by treating the initial state $\hat{x}(0) \coloneq \hat{x}_0$ as an optimization variable, and estimate the combined parameter vector $\theta = \mathrm{vec}(\theta_\mathrm{b}, \theta_\mathrm{a},\theta_\mathrm{LFR})$ by solving
\begin{equation}\label{eq:optim-problem}
\begin{aligned}
\min_{\theta, \hat{x}_0} \quad & V_{\mathcal{D}_N}(\theta, \hat{x}_0),\\
\textrm{s.t.} \quad & \begin{bmatrix}
    \hat{x}(k+1)\\\hat{y}(k)\\z(k)
\end{bmatrix} = W_{\theta_\mathrm{LFR}} \begin{bmatrix}
    \hat{x}(k)\\ u(k)\\w(k)
\end{bmatrix},\\
    &  w(k) = \phi^\mathrm{NL}(z(k)),\quad k\in\left[0,N-1\right].
\end{aligned}
\end{equation}
The formulation in \eqref{eq:optim-problem} facilitates faster model training compared to the method in~\cite{hoekstra_learning-based_2025}. However, several challenges must be addressed. First, the algebraic loop in the LFR structure \eqref{eqs:general_LFR} must be handled carefully to ensure the existence and uniqueness of solutions. This is achieved through the well-posed parametrization introduced in Section~\ref{sec:WellPosedness}. Second, simulation-error minimization may lead to unstable models. Even though cost function~\eqref{eq:cost-fun} strongly penalizes models that become unstable during training, the estimated model might still become unstable on previously unseen data due to the nonlinear nature of the model. This issue is resolved by the stable-by-construction parametrization presented in Section~\ref{sec:stable_param}. Finally, backpropagation through long datasets using the simulation error loss can be computationally demanding, and the associated optimization problem is prone to convergence to local minima. We mitigate these issues by adopting the computationally efficient identification pipeline of~\cite{bemporad_efficient_2025}, leveraging the L-BFGS-B optimizer for its rapid convergence to high-quality solutions. Another major advantage of the applied identification setting is that it can handle $\ell_1$ and group-lasso regularization, enabling automatic augmentation structure discovery and model order selection, as we demonstrate in Section~\ref{sec:ident_algorithm}.\vspace{-6pt}

%% file: sections/3_WP_parametrization.tex
\section{Well-posedness of the LFR-based structure}\label{sec:WellPosedness}\vspace{-6pt}
\subsection{Fixed-point iterations}\vspace{-6pt}
As mentioned in Section~\ref{sec:Augm_intro}, the LFR-based model augmentation structure \eqref{eqs:general_LFR} contains an algebraic loop, i.e., 
\begin{equation}\label{eq:well-posedness-equation}
\left\{
\begin{aligned}
    z(k) &= C_\mathrm{z} \hat{x}(k) + D_\mathrm{zu} u(k) + D_\mathrm{zw} w(k),\\
    w(k) &= \phi^\mathrm{NL}(z(k)),
\end{aligned} \right.
\end{equation}
set of equations needs to be solved at every time step to evaluate the model. To address this issue, \cite{hoekstra_learning-based_2025} proposed constraints on the $D_\mathrm{zw}$ matrix that guarantee well-posedness, i.e., to guarantee that \eqref{eq:well-posedness-equation} has a unique solution $\forall \hat{x}(k)\in\mathbb{R}^{n_{\mathrm{x_b}}+n_{\mathrm{x_a}}}$, $\forall u(k)\in\mathbb{R}^{n_\mathrm{u}}$. For example, restricting $D_\mathrm{zw}\equiv 0$ resolves the algebraic loop in \eqref{eq:well-posedness-equation}, however also limits the number of model augmentation structures that can be represented by the LFR-based formulation. Another option is to select either $D_\mathrm{zw}^\mathrm{ab}$ or~$D_\mathrm{zw}^\mathrm{ba}$ to be tuned freely and set all other members of~$D_\mathrm{zw}$ to zero. The latter approach is more general; however, it still applies restrictions to the interconnection matrix, which limits the flexibility of the structure. Furthermore, it is not intuitive whether to select $D_\mathrm{zw}^\mathrm{ab}$ or $D_\mathrm{zw}^\mathrm{ba}$ to be optimized. Therefore, we aim to guarantee the well-posedness of the LFR-based model augmentation structure by parametrization. To do that, first,~\eqref{eq:well-posedness-equation} is rearranged as
\begin{equation}\label{eq:well-posedness_zk}
    z(k) = \underbrace{D_\mathrm{zw} \phi^\mathrm{NL}(z(k)) + C_\mathrm{z} x(k) + D_\mathrm{zu} u(k)}_{g(z(k), x(k), u(k))}.
\end{equation}
We now formulate a condition for $g(\cdot)$ to guarantee well-posedness of the LFR-based augmentation structure by using the following definition. \vspace{-3pt}
\begin{defn}\label{def:contraction_mapping}
    The function $g(\cdot):\mathbb{R}^{n_\mathrm{z}} \times \mathbb{R}^{n_\mathrm{x}} \times \mathbb{R}^{n_\mathrm{u}}\rightarrow \mathbb{R}^{n_\mathrm{z}}$ in \eqref{eq:well-posedness_zk} is a contraction mapping if there exists $c\in\left[0,1\right)$ for any $(x,u)\in\mathbb{R}^{n_\mathrm{x}} \times \mathbb{R}^{n_\mathrm{u}}$ such that
    \begin{equation}
        \|g(z_1, x, u) - g(z_1, x, u)\|_2 \leq c\|z_1-z_2\|_2,\quad \forall z_1, z_2 \in \mathbb{R}^{n_\mathrm{z}}.
    \end{equation}
\end{defn}\vspace{-3pt}
\begin{prop}\label{prop:FPI}
    If $g(\cdot)$ in \eqref{eq:well-posedness_zk} is a contraction mapping, by the Banach theorem, there exists a unique $z_\ast$ solution to $z=g(z,x,u)$ for any $(x,u)\in\mathbb{R}^{n_\mathrm{x}} \times \mathbb{R}^{n_\mathrm{u}}$, i.e., the LFR-based model augmentation structure \eqref{eqs:general_LFR} is well-posed. Furthermore, this solution can be found starting from any $z_0 \in \mathbb{R}^{n_\mathrm{z}}$ with fixed-point iterations, as
\begin{subequations}\label{eqs:FPI}
\begin{align}
    z_n &= g(z_{n-1}, x, u), \quad \forall n\in\mathbb{Z}_{>0},\\
    z_\ast &= \lim_{n\to \infty} z_n.
\end{align}
\end{subequations}
\end{prop}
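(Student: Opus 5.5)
The plan is to fix an arbitrary pair $(x,u)\in\mathbb{R}^{n_\mathrm{x}}\times\mathbb{R}^{n_\mathrm{u}}$ and study the map $T(z)\coloneq g(z,x,u)$ on $\mathbb{R}^{n_\mathrm{z}}$ with the Euclidean norm. This is a complete metric space. By Definition~\ref{def:contraction_mapping}, read with the evident intended right-hand side $g(z_1,x,u)-g(z_2,x,u)$, the map $T$ is Lipschitz with constant $c<1$. The Banach fixed-point theorem then gives existence and uniqueness of $z_\ast$ with $z_\ast=T(z_\ast)$. Since the pair $(x,u)$ was arbitrary, this holds for every $(x,u)$.

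Next I will translate this back to the algebraic loop \eqref{eq:well-posedness-equation}. The point is that a pair $(z,w)$ solves \eqref{eq:well-posedness-equation} if and only if $z=g(z,x,u)$ and $w=\phi^\mathrm{NL}(z)$. The forward direction is obtained by substitution. The converse holds because $w$ is uniquely determined by $z$. Hence existence and uniqueness of the fixed point $z_\ast$ gives existence and uniqueness of $(z(k),w(k))$ at every time step. Then $\hat{x}(k+1)$ and $\hat{y}(k)$ are uniquely determined by \eqref{eqs:general_LFR}, which is precisely well-posedness of \eqref{eqs:general_LFR}.

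For the iteration claim, I will reproduce the standard argument instead of only citing it. From $\|z_{n+1}-z_n\|_2\le c^n\|z_1-z_0\|_2$ and the triangle inequality,
\[
\|z_m-z_n\|_2\le \frac{c^n}{1-c}\|z_1-z_0\|_2 \quad \text{for } m>n,
\]
so $(z_n)$ is Cauchy. By completeness it converges to some limit $\bar z$. Because $T$ is continuous (being Lipschitz), passing to the limit in $z_n=T(z_{n-1})$ gives $\bar z=T(\bar z)$, and uniqueness forces $\bar z=z_\ast$ regardless of the starting point $z_0$. As a by-product this gives the geometric error bound $\|z_n-z_\ast\|_2\le c^n\|z_0-z_\ast\|_2$, which is useful later for choosing the number of iterations.

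None of these steps is a real obstacle. The only care needed is twofold. First, the contraction constant must be allowed to depend on $(x,u)$ but not on $z$; the definition provides exactly this, and it is all Banach's theorem needs, since it is applied pointwise in $(x,u)$. Second, the equivalence between the fixed-point equation and the loop must be stated explicitly so that uniqueness of $z$ transfers to uniqueness of the whole signal set.
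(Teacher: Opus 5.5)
Your proposal is correct and takes the same route as the paper, which simply invokes the Banach fixed-point theorem for the map $z\mapsto g(z,x,u)$ at each fixed $(x,u)$. You give more detail than the paper does: you state explicitly that the fixed-point equation is equivalent to the algebraic loop, so uniqueness of $z_\ast$ carries over to $w$, and you write out the Cauchy-sequence argument and the geometric error bound that the paper leaves implicit in the citation.
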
\vspace{-12pt}
According to Proposition~\ref{prop:FPI}, if it is guaranteed by the parametrization of $g$ that $g$ is a contraction mapping, then the model augmentation structure is well-posed, and for any $x(k)$, $u(k)$ value at time moment $k$ the fixed-point $z_\ast(k)$ can be found iteratively. To achieve this, certain conditions for the feedback function $\phi^\mathrm{NL}$ need to be satisfied, both for the baseline and learning parts.\vspace{-6pt}

\subsection{Lipschitz continuous baseline model}\vspace{-6pt}
For the general $D_\mathrm{zw} \neq 0$ case, $g(\cdot)$ being a contraction mapping implies that $\phi^\mathrm{NL}$ is Lipschitz bounded. To achieve this condition, first, the following assumption is made regarding the baseline part.\vspace{-3pt}
\begin{assum}\label{assumption:baseline-lipschitz}
    The baseline state transition function $f^\mathrm{FP}_{\theta_\mathrm{b}}$ and output map $h^\mathrm{FP}_{\theta_\mathrm{b}}$ are both locally Lipschitz continuous functions on a known operating range given by $\mathbb{X}\subset \mathbb{R}^{n_{\mathrm{x_b}}}$ and $\mathbb{U}\subset \mathbb{R}^{n_\mathrm{u}}$ such that $x_\mathrm{b}(k) \in \mathbb{X}$, $u(k) \in \mathbb{U}$. Furthermore, $L_f$ and $L_h$, the Lipschitz constants of $f^\mathrm{FP}_{\theta_\mathrm{b}}$ and $h^\mathrm{FP}_{\theta_\mathrm{b}}$, respectively, are known constants.
\end{assum}\vspace{-3pt}
For \emph{Linear Time-Invariant} (LTI) baseline models, Assumption~\ref{assumption:baseline-lipschitz} trivially holds globally; moreover, $L_f$ and $L_h$ Lipschitz constants can be found as shown in Lemma~\ref{lemma:LTI-Lipschitz}.\vspace{-3pt}
\begin{lem}\label{lemma:LTI-Lipschitz}
    Given an LTI-SS baseline model, as
    \begin{subequations}
    \begin{align}
        f^\mathrm{FP}_{\theta_\mathrm{b}}(x_\mathrm{b}(k), u(k)) &= A^\mathrm{FP}_{\theta_\mathrm{b}} x_\mathrm{b}(k) + B^\mathrm{FP}_{\theta_\mathrm{b}} u(k),\\
        h^\mathrm{FP}_{\theta_\mathrm{b}}(x_\mathrm{b}(k), u(k)) &= C^\mathrm{FP}_{\theta_\mathrm{b}} x_\mathrm{b}(k) + D^\mathrm{FP}_{\theta_\mathrm{b}} u(k),
    \end{align}
    \end{subequations}
    Lipschitz constants $L_f$ and $L_h$ can be calculated as
    \begin{equation}
        L_f = \left\| \begin{bmatrix}
            A^\mathrm{FP}_{\theta_\mathrm{b}}  B^\mathrm{FP}_{\theta_\mathrm{b}}
        \end{bmatrix} \right\|_2,\quad
        L_h = \left\| \begin{bmatrix}
            C^\mathrm{FP}_{\theta_\mathrm{b}}  D^\mathrm{FP}_{\theta_\mathrm{b}}
        \end{bmatrix} \right\|_2,
    \end{equation}
    where $\|\cdot\|_2$ denotes the spectral norm.
\end{lem}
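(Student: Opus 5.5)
The plan is to treat each baseline map as a linear map acting on the stacked argument $\zeta=\mathrm{vec}(x_\mathrm{b},u)\in\mathbb{R}^{n_{\mathrm{x_b}}+n_\mathrm{u}}$. Note that $\zeta$ is exactly the latent variable $z_\mathrm{b}$ introduced in Section~\ref{sec:LFR-structure}. With this stacking, $f^\mathrm{FP}_{\theta_\mathrm{b}}(x_\mathrm{b},u)=M_f\zeta$ and $h^\mathrm{FP}_{\theta_\mathrm{b}}(x_\mathrm{b},u)=M_h\zeta$, where
\begin{equation*}
M_f=\begin{bmatrix} A^\mathrm{FP}_{\theta_\mathrm{b}} & B^\mathrm{FP}_{\theta_\mathrm{b}}\end{bmatrix},\quad M_h=\begin{bmatrix} C^\mathrm{FP}_{\theta_\mathrm{b}} & D^\mathrm{FP}_{\theta_\mathrm{b}}\end{bmatrix}.
\end{equation*}
This reduces the claim to a statement about the Lipschitz constant of a single linear map. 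The Lipschitz constant is taken with respect to the Euclidean norm on the joint argument, which is consistent with how $\phi^\mathrm{FP}_{\theta_\mathrm{b}}$ enters $g(\cdot)$ through $z_\mathrm{b}$.

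First, I show that $\|M\|_2$ is a valid Lipschitz constant. For any two arguments $\zeta_1,\zeta_2$, linearity gives $\|M\zeta_1-M\zeta_2\|_2=\|M(\zeta_1-\zeta_2)\|_2\le\|M\|_2\|\zeta_1-\zeta_2\|_2$. This is the definition of the induced (spectral) norm. Applying it with $M=M_f$ and $M=M_h$ yields the stated values of $L_f$ and $L_h$. The bound holds for all arguments, so Assumption~\ref{assumption:baseline-lipschitz} is satisfied globally, on any $\mathbb{X}\times\mathbb{U}$.

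Second, to show that the constant is the smallest possible and not merely an upper bound, I take a top right singular vector $v$ of $M$. Choosing $\zeta_1-\zeta_2=v$ gives $\|M v\|_2=\|M\|_2\|v\|_2$, so no smaller constant works.

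There is no real obstacle here; the proof is routine. The only point needing care is to state explicitly that the Lipschitz property is measured on the concatenated vector $\mathrm{vec}(x_\mathrm{b},u)$ in the 2-norm. Under a different norm on the product space, for example the sum of separate norms, a different constant would result.
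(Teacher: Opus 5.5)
Your proof is correct and follows essentially the same route as the paper. The paper also writes $f^\mathrm{FP}_{\theta_\mathrm{b}}=M_{\theta_\mathrm{b}}z_\mathrm{b}$ with $M_{\theta_\mathrm{b}}=\begin{bmatrix}A^\mathrm{FP}_{\theta_\mathrm{b}} & B^\mathrm{FP}_{\theta_\mathrm{b}}\end{bmatrix}$ and identifies the Lipschitz constant with $\sup_{\xi\neq 0}\|M_{\theta_\mathrm{b}}\xi\|_2/\|\xi\|_2=\|M_{\theta_\mathrm{b}}\|_2$; your upper bound plus top-singular-vector tightness argument just spells out that supremum identity.
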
\vspace{-14pt}
\begin{pf}
    See Appendix~\ref{appendix:lemma1}.
\end{pf}\vspace{-12pt}
For nonlinear baseline models, a possible way of computing the Lipschitz constant is to find the upper bound of their derivatives inside the region $x_\mathrm{b}(k) \in \mathbb{X}$, $u(k) \in \mathbb{U}$. Since it is reasonable to assume that both $f^\mathrm{FP}_{\theta_\mathrm{b}}$ and $h^\mathrm{FP}_{\theta_\mathrm{b}}$ are (at least once) continuously differentiable functions, the Lipschitz constant can be computed as
\begin{equation}
    \mathrm{Lip}\left\{f^\mathrm{FP}_{\theta_\mathrm{b}}\right\} = \sup_{x_\mathrm{b}\in\mathbb{X}, u\in\mathbb{U}} \left\| \pdv{f^\mathrm{FP}_{\theta_\mathrm{b}}}{z_\mathrm{b}}\right\|_2.
\end{equation}
Here, $z_\mathrm{b} = \mathrm{vec}(x_\mathrm{b}, u)$ and the time notation is omitted for convenience.\vspace{-3pt}
\begin{rem}
    If the baseline parameters $\theta_\mathrm{b}$ are tuned during the optimization, the Lipschitz constants that are calculated using the nominal parameter values $\theta_\mathrm{b}^0$ might change. To prevent possible issues, it is advised to overestimate the Lipschitz bounds. Moreover, applying regularization (see Section~\ref{sec:phys-parm-regul}) may limit the deviation of the baseline parameters from their nominal values, thus preserving the accuracy of the original Lipschitz constants.
\end{rem}\vspace{-6pt}
\subsection{Lipschitz constant of the learning component}\label{sec:ANN-lipschitz}\vspace{-6pt}
Based on Assumption~\ref{assumption:baseline-lipschitz}, $f^\mathrm{FP}_{\theta_\mathrm{b}}$ is $L_f$-Lipschitz, while $h^\mathrm{FP}_{\theta_\mathrm{b}}$ is $L_h$-Lipschitz. Then it is reasonable to construct the learning component according to Condition~\ref{cond:ANN-lipschitz}.\vspace{-3pt}
\begin{cond}\label{cond:ANN-lipschitz}
    The learning component $\phi^\mathrm{ANN}_{\theta_\mathrm{a}}$ in \eqref{eqs:general_LFR} is $L_\mathrm{ANN}$-Lipschitz with $L_\mathrm{ANN}{\leq} L$, where $L=\max \{L_f,\,L_h\}$.
\end{cond}\vspace{-3pt}
Condition~\ref{cond:ANN-lipschitz} can be enforced by following different strategies. The most common approach is to guarantee that each layer in the ANN is 1-Lipschitz, then multiply the output by~$L$. ANN layers with a Lipschitz constant of 1 can be enforced, e.g., by layer normalization~\cite{miyato_spectral_2018}, or projected gradient-based training~\cite{gouk_regularisation_2021}. 
Recently, various research also aimed to guarantee Lipschitz bounded ANNs directly by parametrization, see, e.g.,~\cite{wang_direct_2023}. In this paper, we take a relatively simple approach that constrains the Lipschitz constant of the learning component by regularization. The learning component is considered as a feedforward ANN with~$H$ hidden layers. For the applied activation function~$\sigma$, we make the following assumption.\vspace{-3pt}
\begin{assum}\label{assumption:lipschitz-act-fun}
    The activation function $\sigma: \mathbb{R} \rightarrow \mathbb{R}$ is 1-Lipschitz.
\end{assum}\vspace{-3pt}
Note that Assumption~\ref{assumption:lipschitz-act-fun} is typically true for the most commonly applied activation functions~\cite{goodfellow_deep_2016}. Since computing the exact Lipschitz constant is an NP-hard problem even for a neural network with 2 hidden layers~\cite{virmaux_lipschitz_2018}, we utilize the following, commonly applied approximation for the upper bound for $L_\mathrm{ANN}$ \cite{gouk_regularisation_2021}:
{\setlength{\abovedisplayskip}{6pt}
 \setlength{\belowdisplayskip}{6pt}
\begin{equation}\label{eq:lipschitz-ann-upper}
    \mathrm{Lip}\left\{\phi_\mathrm{\theta_\mathrm{a}}^\mathrm{ANN}\right\} = L_\mathrm{ANN} \leq\prod_{i=0}^H \left\|W_i\right\|_2.
\end{equation}
}\noindent
Based on \eqref{eq:lipschitz-ann-upper}, a regularization term is added to the cost function \eqref{eq:cost-fun} for bounding the Lipschitz constant of the learning component:
{\setlength{\abovedisplayskip}{6pt}
 \setlength{\belowdisplayskip}{6pt}
\begin{equation}\label{eq:ANN_lipschitz_regul}
    r_\mathrm{L}(\theta_\mathrm{a}) = \rho_\mathrm{L} \max\left\{\prod_{i=0}^H \left\|W_i\right\|_2 - L, 0\right\}^2,
\end{equation}
}\noindent
where $\rho_\mathrm{L} \gg 1$ is a trade-off parameter. If the optimization results in such $\theta_\mathrm{a}$ parameters for which Condition~\ref{cond:ANN-lipschitz} is not satisfied, model training can be repeated with an increased $\rho_\mathrm{L}$ value. Note that the presented approach can be interpreted as a soft-constraint for the Lipschitz constant of the learning component; however, any of the previously mentioned methods can be incorporated into the proposed LFR-based model augmentation setting for a stricter guarantee of fulfilling Condition~\ref{cond:ANN-lipschitz}.\vspace{-6pt}

\subsection{Parametrizing the $D_\mathrm{zw}$ matrix}\vspace{-6pt}
With Assumption~\ref{assumption:baseline-lipschitz} holding and Condition~\ref{cond:ANN-lipschitz} satisfied, it is guaranteed that $\phi^\mathrm{NL}$ is $L$-Lipschitz. We now provide a direct (i.e., constraint-free) parametrization for the $D_\mathrm{zw}\in\mathbb{R}^{n_\mathrm{z}\times n_\mathrm{w}}$ matrix to guarantee that $\|D_\mathrm{zw}\|_2<1/L$. This condition ensures that $g(\cdot)$ is a contraction mapping, according to Definition~\ref{def:contraction_mapping}. For the proposed parametrization, the Cayley transform is utilized. Given a square matrix $M$ with $I+M$ invertible, its Cayley transform is defined as $\mathrm{Cayley}(M)=(I-M)(I+M)^{-1}$. For non-square matrices, it can be generalized as follows:\vspace{-3pt}
\begin{lem}[Generalized Cayley transform \cite{verhoek_learning_2023}]\label{lemma:general_Cayley}
    Let $M\in\mathbb{R}^{n\times m}$ with $n \geq m$. Then $M^\top M \prec I$ if and only if there exist $X, Y \in \mathbb{R}^{m\times m}$ and $Z\in\mathbb{R}^{(n-m)\times m}$ such that
    \begin{equation}\label{eq:general_cayley}
        M = \begin{bmatrix}
            \mathrm{Cayley}(N)\\
            -2Z(I+N)^{-1}
        \end{bmatrix},
    \end{equation}
    where $N=X^\top X + (Y - Y^\top) + Z^\top Z + \epsilon I$, with $0<\epsilon \ll 1$.
\end{lem}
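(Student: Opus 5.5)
The plan is to prove the two directions separately, with one algebraic identity doing the work in both. Write $S \coloneq I+N$ and split $M=\begin{bmatrix} M_1^\top & M_2^\top\end{bmatrix}^\top$ with $M_1\in\mathbb{R}^{m\times m}$ and $M_2\in\mathbb{R}^{(n-m)\times m}$. The identity is
\begin{equation*}
S^\top S-(I-N)^\top(I-N)=2(N+N^\top),
\end{equation*}
which follows by expanding both products.

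\emph{Sufficiency.} First I show that $S$ is invertible. The symmetric part of $N$ is $X^\top X+Z^\top Z+\epsilon I\succ 0$, so $v^\top S v>0$ for every $v\neq 0$. Hence $Sv\neq 0$, and $M$ is well defined. Next, $I-N$ commutes with $S^{-1}$, so $\mathrm{Cayley}(N)=(I-N)S^{-1}$. This gives
\begin{equation*}
I-M^\top M = S^{-\top}\big[S^\top S-(I-N)^\top(I-N)-4Z^\top Z\big]S^{-1}.
\end{equation*}
By the identity, the bracket equals $2(N+N^\top)-4Z^\top Z=4X^\top X+4\epsilon I\succ 0$. A congruence with the invertible $S^{-1}$ preserves positive definiteness, so $M^\top M\prec I$.

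\emph{Necessity.} Assume $M^\top M\prec I$. Then $M_1^\top M_1\preceq M^\top M\prec I$, so $\|M_1\|_2<1$ and $I+M_1$ is invertible.
\begin{itemize}
\item I invert the Cayley map: set $N\coloneq(I-M_1)(I+M_1)^{-1}$. The factors commute, and a direct computation gives $I+N=2(I+M_1)^{-1}$ and $\mathrm{Cayley}(N)=M_1$.
\item I set $Z\coloneq-\tfrac12 M_2 S=-M_2(I+M_1)^{-1}$, so that $-2ZS^{-1}=M_2$.
\item Since $I-N=M_1S$, the identity gives $2(N+N^\top)=S^\top(I-M_1^\top M_1)S$. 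Using $Z^\top Z=\tfrac14 S^\top M_2^\top M_2 S$, this yields
\begin{equation*}
\tfrac12(N+N^\top)-Z^\top Z=\tfrac14 S^\top(I-M^\top M)S\succ 0.
\end{equation*}
\item Let $\epsilon$ be smaller than the smallest eigenvalue of this matrix. Take $X$ to be a Cholesky (or symmetric square-root) factor of $\tfrac12(N+N^\top)-Z^\top Z-\epsilon I\succeq 0$.
\item Take $Y\coloneq\tfrac14(N-N^\top)$, so that $Y-Y^\top=\tfrac12(N-N^\top)$.
\end{itemize}
Summing the symmetric and skew-symmetric parts then recovers $N=X^\top X+(Y-Y^\top)+Z^\top Z+\epsilon I$, as required.

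The main obstacle is the role of $\epsilon$ in the necessity direction. For a fixed $\epsilon$, the image of the parametrization is only the subset of contractions with $\tfrac14 S^\top(I-M^\top M)S\succeq\epsilon I$. So the ``only if'' part must be read as ``for a sufficiently small $\epsilon>0$ depending on $M$'', which is what $0<\epsilon\ll 1$ is meant to express. I would state this quantifier explicitly. Everything else is routine, apart from checking the commutation of $(I-N)$ with $S^{-1}$, which holds because both are rational functions of $N$.
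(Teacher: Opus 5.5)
Your proof is correct. The paper gives no argument for this lemma of its own; it imports it from \cite{verhoek_learning_2023}, so there is no in-paper route to compare against.

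Your direct computation is the standard one and is complete. For sufficiency, the identity $S^\top S-(I-N)^\top(I-N)=2(N+N^\top)$ gives $I-M^\top M=4S^{-\top}(X^\top X+\epsilon I)S^{-1}\succ 0$. For necessity, you invert the Cayley map and take $Z=-M_2(I+M_1)^{-1}$. You then read $X$ off the symmetric part $\tfrac14 S^\top(I-M^\top M)S+Z^\top Z$ of $N$, and $Y$ off its skew part.

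Your caveat about $\epsilon$ is a real correction to the statement as written. For a fixed $\epsilon>0$ the parametrization is not onto the strict contractions: for $n=m=1$ its image is $(-1,(1-\epsilon)/(1+\epsilon)]$. So the ``only if'' direction holds only with $\epsilon$ chosen depending on $M$, and you are right to state that quantifier explicitly.

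This does not affect the paper's results. They use only the ``if'' direction, which is valid for every fixed $\epsilon>0$, to guarantee $\|\bar{D}_\mathrm{zw}\|_2<1$ and the analogous bounds on $\bar{A}$, $\bar{B}_\mathrm{w}$ and $\bar{C}_\mathrm{z}$.
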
\vspace{-3pt}
To parametrize the $D_\mathrm{zw}$ matrix, we introduce the following free variables: $X_\mathrm{D}, Y_\mathrm{D}\in\mathrm{R}^{n\times n}$, $Z_\mathrm{D}\in\mathbb{R}^{(n-m)\times m}$, and $d\in\mathbb{R}$, where $n=\mathrm{max}(n_\mathrm{z}, n_\mathrm{w})$, and similarly $m=\mathrm{min}(n_\mathrm{z}, n_\mathrm{w})$. Then, the $D_\mathrm{zw}$ matrix is expressed as
{\setlength{\abovedisplayskip}{6pt}
 \setlength{\belowdisplayskip}{6pt}
\begin{equation}\label{eq:Dzw-final-parametrization}
    D_\mathrm{zw} = \begin{cases}
        \frac{\sigma_\mathrm{D}}{L} \bar{D}_\mathrm{zw}, &\text{if $n_\mathrm{w}\geq n_\mathrm{z}$},\\
        \frac{\sigma_\mathrm{D}}{L}\bar{D}_\mathrm{zw}^\top, &\text{if $n_\mathrm{w} > n_\mathrm{z}$},
    \end{cases}
\end{equation}
}\noindent
where scaling factor $\sigma_\mathrm{D}$ is computed\footnote{\scriptsize By the sigmoid function, we refer to $\mathrm{sigmoid}(x)=\frac{1}{1+e^{-x}}$. We assume that $x$ remains bounded for bounded signals, 
hence $\mathrm{sigmoid}(x){\in}(0,1)$. This boundedness assumption can be replaced by using $1{-}\epsilon$ in the numerator of the sigmoid function, where $\epsilon$ is an arbitrary small positive value.} as $\sigma_D = \mathrm{sigmoid}(d)$, and $\bar{D}_\mathrm{zw}$ is computed as follows:
{\setlength{\abovedisplayskip}{6pt}
 \setlength{\belowdisplayskip}{6pt}
\begin{equation}\label{eq:barD_zw_param}
    \bar{D}_\mathrm{zw} = \begin{bmatrix}
        \mathrm{Cayley}(N_\mathrm{D})\\
        -2Z_\mathrm{D}(I + N_\mathrm{D})^{-1}
    \end{bmatrix},
\end{equation}
}\noindent
where $N_\mathrm{D}=X_\mathrm{D}^\top X_\mathrm{D}+Y_\mathrm{D}-Y_\mathrm{D}^\top + Z_\mathrm{D}^\top Z_\mathrm{D}+\epsilon I_n$. Here $\epsilon\in\mathbb{R}_{>0}$ is a small user-specified constant. According to Lemma~\ref{lemma:general_Cayley}, \eqref{eq:barD_zw_param} ensures that $\|\bar{D}_\mathrm{zw}\|_2<1$. Finally, the right-hand side of 
\eqref{eq:Dzw-final-parametrization} guarantees that $\|D_\mathrm{zw}\|_2<1/L$ by parametrization, in a constraint-free manner.\vspace{-3pt}
%
%
\begin{rem}
    Instead of the generalized Cayley transform, a matrix-exponential-based parametrization can be used to ensure well-posedness; see, e.g., \cite{drenth_efficient_2025}. However, the Cayley transform-based parametrization can be applied for non-square matrices, which is crucial for LFR augmentation structures, as often  $n_\mathrm{z}\neq n_\mathrm{w}$. 
\end{rem}\vspace{-6pt}

\subsection{Resulting contraction mapping}\vspace{-6pt}
Finally, with the proposed parametrization and conditions regarding the baseline and learning components, the well-posedness of the LFR-based model augmentation function can be proven.\vspace{-3pt}
\begin{thm}\label{thm:WP}
    Let Assumption~\ref{assumption:baseline-lipschitz} and Condition~\ref{cond:ANN-lipschitz} hold. If $D_\mathrm{zw}$ is parametrized as outlined in \eqref{eq:Dzw-final-parametrization}, then the LFR-based model augmentation structure \eqref{eqs:general_LFR} is well-posed.
\end{thm}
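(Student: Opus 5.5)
The plan is to reduce the statement to Proposition~\ref{prop:FPI}. It suffices to show that, for every fixed $(x,u)$, the map $g(\cdot,x,u)$ in \eqref{eq:well-posedness_zk} is a contraction in the sense of Definition~\ref{def:contraction_mapping}. Banach's theorem then gives a unique fixed point $z_\ast$ of \eqref{eq:well-posedness-equation}, so \eqref{eqs:general_LFR} is well-posed. The argument splits into three steps: a Lipschitz bound on $\phi^\mathrm{NL}$, a norm bound on $D_\mathrm{zw}$, and their composition.

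\emph{Step 1: Lipschitz bound on $\phi^\mathrm{NL}$.} Write $z=\mathrm{vec}(z_\mathrm{b},z_\mathrm{a})$, so that $\phi^\mathrm{NL}(z)=\mathrm{vec}(f^\mathrm{FP}(z_\mathrm{b}),h^\mathrm{FP}(z_\mathrm{b}),\phi^\mathrm{ANN}(z_\mathrm{a}))$. For two arguments $z_1,z_2$, expand the squared norm of the difference blockwise and use Assumption~\ref{assumption:baseline-lipschitz} and Condition~\ref{cond:ANN-lipschitz} on each block. The baseline block contains two stacked functions, $f^\mathrm{FP}$ and $h^\mathrm{FP}$, acting on the same argument $z_\mathrm{b}$. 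Stacking them naively gives the constant $\sqrt{L_f^2+L_h^2}$, which can be as large as $\sqrt2 L$, rather than $L=\max\{L_f,L_h\}$.

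This step is the main obstacle. I would first try to read Assumption~\ref{assumption:baseline-lipschitz} as bounding the Lipschitz constant of the stacked map $\phi^\mathrm{FP}$ by $L$, which is the reading implicit in the text stating that ``$\phi^\mathrm{NL}$ is $L$-Lipschitz.'' If instead the assumption is read separately for $f$ and $h$, the bound weakens to $\sqrt2 L$, and the scaling in \eqref{eq:Dzw-final-parametrization} would need a factor $1/(\sqrt2 L)$. I would state this assumption explicitly in the proof.

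Under the stacked reading, the block structure gives
\begin{equation*}
\|\phi^\mathrm{NL}(z_1)-\phi^\mathrm{NL}(z_2)\|_2^2\le L^2\|z_{\mathrm{b},1}-z_{\mathrm{b},2}\|_2^2+L^2\|z_{\mathrm{a},1}-z_{\mathrm{a},2}\|_2^2=L^2\|z_1-z_2\|_2^2 .
\end{equation*}
The mixed terms do not arise because the baseline and learning components depend on disjoint sub-vectors of $z$. Restriction to the operating range $\mathbb{X}\times\mathbb{U}$ is handled by taking the assumption as holding wherever the iterates live.

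\emph{Step 2: norm bound on $D_\mathrm{zw}$.} By Lemma~\ref{lemma:general_Cayley}, the matrix in \eqref{eq:barD_zw_param} satisfies $\bar D_\mathrm{zw}^\top\bar D_\mathrm{zw}\prec I$, hence $\|\bar D_\mathrm{zw}\|_2<1$. Transposition does not change the spectral norm, so this holds in both cases of \eqref{eq:Dzw-final-parametrization}. Since $\sigma_\mathrm{D}=\mathrm{sigmoid}(d)\in(0,1)$, it follows that $\|D_\mathrm{zw}\|_2\le\sigma_\mathrm{D}/L<1/L$.

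\emph{Step 3: composition.} For fixed $(x,u)$, the affine terms $C_\mathrm{z}x+D_\mathrm{zu}u$ cancel in $g(z_1,x,u)-g(z_2,x,u)$. Hence
\begin{equation*}
\|g(z_1,x,u)-g(z_2,x,u)\|_2\le\|D_\mathrm{zw}\|_2\,L\,\|z_1-z_2\|_2\le c\|z_1-z_2\|_2,
\end{equation*}
with $c=\sigma_\mathrm{D}<1$, and $c$ is independent of $(x,u)$. Thus $g$ is a contraction in the sense of Definition~\ref{def:contraction_mapping}. Proposition~\ref{prop:FPI} then yields existence and uniqueness of $z_\ast$, together with convergence of the iterations \eqref{eqs:FPI}, which concludes the proof.
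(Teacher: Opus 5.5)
Your proposal is correct and follows the paper's proof: reduce to Proposition~\ref{prop:FPI}, cancel $C_\mathrm{z}x+D_\mathrm{zu}u$, and combine $\|D_\mathrm{zw}\|_2<1/L$ with $\mathrm{Lip}\{\phi^\mathrm{NL}\}\le L$; your blockwise argument over the disjoint sub-vectors $z_\mathrm{b},z_\mathrm{a}$ also fills in a step the paper only asserts. Your $\sqrt2$ caveat points to a genuine flaw in the paper's version, because under the literal reading the theorem fails: take scalar $x_\mathrm{b}$ with $f^\mathrm{FP}(x_\mathrm{b},u)=h^\mathrm{FP}(x_\mathrm{b},u)=x_\mathrm{b}$, so $L_f=L_h=L=1$, and a $D_\mathrm{zw}$ whose only nonzero row is its first, equal to $(\tfrac12,\tfrac12,0,\dots,0)$; this matrix is attainable by \eqref{eq:Dzw-final-parametrization} with $\|D_\mathrm{zw}\|_2=1/\sqrt2<1/L$, yet the loop reduces to $z_{\mathrm{b},1}=z_{\mathrm{b},1}+[C_\mathrm{z}\hat{x}+D_\mathrm{zu}u]_1$, so your explicit stacked-Lipschitz hypothesis (or a $1/\sqrt{L_f^2+L_h^2}$ scaling) is genuinely needed, while the local-versus-global Lipschitz issue you wave away is left equally unaddressed by the paper's proof.
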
\vspace{-12pt}
\begin{pf}
    According to Proposition~\ref{prop:FPI}, \eqref{eqs:general_LFR} is well-posed if $g(\cdot)$ in \eqref{eq:well-posedness_zk} is a contraction mapping. 
    Hence, by substituting \eqref{eq:well-posedness_zk} into Definition~\ref{def:contraction_mapping}, there should exist a $c\in\left[0,\,1\right)$ for guaranteed well-posedness  with the following property:
    \begin{multline}
        \|D_\mathrm{zw}\phi^\mathrm{NL}(z_1) + C_\mathrm{z} x + D_\mathrm{zu} u - D_\mathrm{zw}\phi^\mathrm{NL}(z_2) - C_\mathrm{z} x - D_\mathrm{zu} u\|_2\\ \leq c\norm{z_1 - z_2}_2,
    \end{multline}
    for any $x\in\mathbb{R}^{n_\mathrm{x}}$, $u\in\mathbb{R}^{n_\mathrm{u}}$, and $z_1, z_2\in\mathbb{R}^{n_\mathrm{z}}$. With rearranging and dropping out terms, the following inequality needs to hold for any $c\in\left[0,\,1\right)$:
    \begin{equation}\label{eq:contraction_ineq}
        \norm{D_\mathrm{zw}}_2 \norm{\phi^\mathrm{NL}(z_1) - \phi^\mathrm{NL}(z_2)}_2 \leq c\norm{z_1 - z_2}_2.
    \end{equation}
    Since $\|D_\mathrm{zw}\|_2<1/L$ by parametrization, and according to Assumption~\ref{assumption:baseline-lipschitz} and Condition~\ref{cond:ANN-lipschitz}, $\mathrm{Lip}\{\phi^\mathrm{NL}\}\leq L$, i.e.,
    \begin{equation}\label{eq:phi_NL_lipschitz}
        \dfrac{\norm{\phi^\mathrm{NL}(z_1) - \phi^\mathrm{NL}(z_2)}_2}{\norm{z_1 - z_2}_2}  \leq L,
    \end{equation}
    inequality \eqref{eq:contraction_ineq} is fulfilled.\placeqed
\end{pf}\vspace{-12pt}
According to Theorem~\ref{thm:WP}, \eqref{eq:Dzw-final-parametrization} provides a constraint-free parametrization to guarantee well-posedness of the LFR augmentation structure (under Condition~\ref{cond:ANN-lipschitz}), by tuning the free variables $X_\mathrm{D}, Y_\mathrm{D}, Z_\mathrm{D}$, and $d$ instead of directly estimating the elements of $D_\mathrm{zw}$. Consequently, optimization problem~\eqref{eq:optim-problem} can be solved with an unconstrained algorithm, such as \emph{stochastic gradient descent}~(SGD). This concludes Contribution~1 of the paper.\vspace{-6pt}

%% file: sections/4_stability.tex
\section{Stable-by-construction model augmentation}\label{sec:stable_param}\vspace{-6pt}
Having established well-posedness of the LFR-based augmentation structure through the unconstrained parametrization presented in Section~\ref{sec:WellPosedness}, we now turn to the problem of constructing a direct parametrization that guarantees stability. A key requirement for the proposed method is to be compatible with the well-posedness approach to enable enforcing both properties simultaneously with a unified parametrization.\vspace{-6pt}

\subsection{Contracting model property}\vspace{-6pt}
As discussed in Section~\ref{sec:intro}, in recent years, various approaches have been proposed to guarantee stability of the model estimate in nonlinear system identification. Many of these methods, e.g.,~\cite{revay_recurrent_2024,verhoek_learning_2023}, utilize a special form of nonlinear stability, namely contraction, since it is especially suitable for model learning applications.\vspace{-3pt}
\begin{defn}\label{def:contracting_model}
    The model represented by \eqref{eqs:general_LFR} is said to be contracting with rate $\alpha\in(0,\,1)$ if for any two initial conditions $\hat{x}_\mathrm{i}(0),\,\hat{x}_\mathrm{j}(0) \in\mathbb{R}^{n_{\hat{x}}}$, given the same, bounded input sequence $\{u(k)\}_{k=0}^{k\to \infty}$, the corresponding state sequences $\hat{x}_\mathrm{i}$, $\hat{x}_\mathrm{j}$ satisfy
    \begin{equation}
        \|\hat{x}_\mathrm{i}(k) - \hat{x}_\mathrm{j}(k)\|_2 \leq K\alpha^k \|\hat{x}_\mathrm{i}(0) - \hat{x}_\mathrm{j}(0)\|_2, \quad \forall k>0,
    \end{equation}
    for some $K>0$.
\end{defn}\vspace{-3pt}
From Definition~\ref{def:contracting_model}, the benefits of the contraction property in system identification become clear. 
In general, small perturbations to the initial condition may lead to drastically different model responses. In the context of model training, this sensitivity can cause the optimization to get stuck in certain regimes or even to diverge. Enforcing contraction of the estimated model provides an effective solution to this challenge. Moreover, if the true system is stable, it should be reflected in the identified model. However, naive simulation-error minimization may yield unstable models, even if the underlying data-generating system is stable. This issue can also be avoided by utilizing a stable-by-construction parametrization. Finally, contraction (or incremental stability) of the predictor is a key requirement for proving statistical consistency of the estimator (see, e.g.,~\cite{beintema_deep_2023}). While this property could be verified a posteriori, the proposed contracting parametrization ensures that the stability condition is satisfied by design.\vspace{-6pt}

\subsection{Conditions for contracting parametrization}\vspace{-6pt}
To derive the necessary conditions for the contracting property, we consider the incremental dynamics, i.e., the error dynamics between two trajectories $(\hat{x}_i, z_i, w_i)$, and $(\hat{x}_j, z_j, w_j)$ under the same input sequence. The incremental form of \eqref{eqs:general_LFR} can be expressed as
\begin{subequations}\label{eqs:incremental_dyn}
\begin{align}
    \begin{bmatrix}
        \delta_{\hat{x}}(k+1)\\ \delta_{\hat{y}}(k)\\ \delta_z(k)
    \end{bmatrix} &= \begin{bmatrix}
        A & B_\mathrm{u} & B_\mathrm{w}\\
        C_\mathrm{y} & D_\mathrm{yu} & D_\mathrm{yw}\\
        C_\mathrm{z} & D_\mathrm{{zu}} & D_\mathrm{{zw}}
    \end{bmatrix} \begin{bmatrix}
        \delta_{\hat{x}}(k)\\ \delta_u(k)\\ \delta_w(k)
    \end{bmatrix},\\
    \delta_w(k) &= \phi^\mathrm{NL}(z_j(k) + \delta_z(k)) - \phi^\mathrm{NL}(z_j(k)),\label{eq:incremental_dyn_NL}
\end{align}
\end{subequations}
where $\delta_{\hat{x}} = \hat{x}_i - \hat{x}_j$, and $\delta_{\hat{y}}$, \dots, $\delta_w$ are defined similarly.

Since $\phi^\mathrm{NL}$ is Lipschitz bounded, it is continuously differentiable w.r.t. $z(k)$, and the fundamental theorem of calculus~\cite{koelewijn_analysis_2023} can be applied to provide a factorized form:
\begin{multline}\label{eq:Phi_NL_FTC}
    \phi^\mathrm{NL}(z_j(k) + \delta_z(k)) - \phi^\mathrm{NL}(z_j(k)) =\\ \underbrace{\left(\int_0^1 \frac{\partial\phi^\mathrm{NL}}{\partial z} (z_j(k) + \lambda \delta_z(k))\,\mathrm{d}\lambda \right)}_{P^\mathrm{NL}(z_\mathrm{i}(k), z_\mathrm{j}(k))} \delta_z(k).
\end{multline}
The contracting property is defined for trajectories with the same input sequence (i.e., $\delta_u \equiv 0$), substituting \eqref{eq:Phi_NL_FTC} back into \eqref{eqs:incremental_dyn} gives:
\begin{multline}\label{eq:delta_w_expression}
    \delta_w(k) = \left( I_{n_\mathrm{w}} -P^\mathrm{NL}(z_\mathrm{i}(k), z_\mathrm{j}(k)) D_\mathrm{zw}\right)^\mathrm{-1} \\P^\mathrm{NL}(z_\mathrm{i}(k), z_\mathrm{j}(k)) C_\mathrm{z} \delta_x(k),
\end{multline}
where $( I_{n_w} - P^\mathrm{NL}(\cdot) D_\mathrm{zw})^\mathrm{-1}$ exists by Theorem~\ref{thm:WP}.
%
%
Then, the incremental state transition function can be expressed as
\begin{multline}\label{eq:delta_x_map}
    \delta_x(k+1) =\\ \underbrace{\left(A + B_\mathrm{w}\left( I_{n_w} - P^\mathrm{NL}(\cdot) D_\mathrm{zw}\right)^\mathrm{-1} P^\mathrm{NL}(\cdot) C_\mathrm{z}\right)}_{\mathcal{A}(z_\mathrm{i}(k), z_\mathrm{j}(k))}\delta_x(k),
\end{multline}
where $\mathcal{A}(z_\mathrm{i}(k), z_\mathrm{j}(k))$ is the one-step-ahead map of the incremental state that depends on $z_\mathrm{i}(k)$, and $z_\mathrm{j}(k)$.\vspace{-3pt}
\begin{thm}\label{thm:contraction}
    Consider an LFR-based model augmentation structure in the form of \eqref{eqs:general_LFR}, satisfying both Assumption~\ref{assumption:baseline-lipschitz} and Condition~\ref{cond:ANN-lipschitz} and using the well-posed parametrization outlined in \eqref{eq:Dzw-final-parametrization}. Suppose there exists an $\mathcal{X}\succ 0$, and an $\bar{\alpha}\in\left(0, 1\right]$, such that
    \begin{equation}\label{eq:contracting_mdl_ineq}
        \bar{\alpha}^2\mathcal{X} - \mathcal{A}^\top(z_\mathrm{i}(k), z_\mathrm{j}(k))\mathcal{X}\mathcal{A}(z_\mathrm{i}(k), z_\mathrm{j}(k)) \succ 0, \quad \forall k>0.
    \end{equation}
    Then, \eqref{eqs:general_LFR} is contracting with some rate $\alpha<\bar{\alpha}$.
\end{thm}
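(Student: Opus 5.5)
The plan is to treat $V(\delta) = \delta^\top \mathcal{X}\delta$ as an incremental Lyapunov function for the one-step incremental map \eqref{eq:delta_x_map}. From it I will derive the exponential bound required by Definition~\ref{def:contracting_model}.

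First, fix two trajectories of \eqref{eqs:general_LFR} driven by the same bounded input, so that $\delta_u\equiv 0$. By Theorem~\ref{thm:WP} both trajectories are uniquely defined for all $k$, and the factor $(I_{n_\mathrm{w}}-P^\mathrm{NL}D_\mathrm{zw})^{-1}$ exists. Indeed, $\|P^\mathrm{NL}\|_2\le L$ follows from the Lipschitz bound \eqref{eq:phi_NL_lipschitz}, since the averaged Jacobian is bounded by $L$. Combined with $\|D_\mathrm{zw}\|_2<1/L$, this gives $\|P^\mathrm{NL}D_\mathrm{zw}\|_2<1$. Hence \eqref{eq:delta_w_expression} and \eqref{eq:delta_x_map} are valid identities along the two trajectories: $\delta_x(k+1)=\mathcal{A}(z_\mathrm{i}(k),z_\mathrm{j}(k))\delta_x(k)$.

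Second, multiply \eqref{eq:contracting_mdl_ineq} from both sides by $\delta_x(k)$. This gives $V(\delta_x(k+1)) = \delta_x^\top(k)\mathcal{A}^\top\mathcal{X}\mathcal{A}\delta_x(k) < \bar{\alpha}^2 V(\delta_x(k))$ whenever $\delta_x(k)\neq 0$. To get a strict rate $\alpha<\bar\alpha$, I need the strict inequality to hold uniformly, i.e.\ $\mathcal{A}^\top\mathcal{X}\mathcal{A}\preceq \alpha^2\mathcal{X}$ for some $\alpha<\bar\alpha$ independent of $k$. I would argue this via compactness. $P^\mathrm{NL}$ ranges over the bounded set $\{P:\|P\|_2\le L\}$, and the map $P\mapsto\mathcal{A}$ is continuous there because the inverse exists uniformly: $\|(I-PD_\mathrm{zw})^{-1}\|_2\le 1/(1-L\|D_\mathrm{zw}\|_2)$. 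Hence the values of $\mathcal{A}$ lie in a compact set. The inequality, interpreted as holding on the (closure of the) attainable set of $\mathcal{A}$, then yields a uniform margin. Alternatively, I would simply read the hypothesis as a uniform strict LMI and take $\alpha^2=\sup_k \lambda_{\max}(\mathcal{X}^{-1/2}\mathcal{A}^\top\mathcal{X}\mathcal{A}\mathcal{X}^{-1/2})<\bar\alpha^2$.

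Third, iterate to get $V(\delta_x(k))\le\alpha^{2k}V(\delta_x(0))$. Then use $\lambda_{\min}(\mathcal{X})\|\delta\|_2^2\le V(\delta)\le\lambda_{\max}(\mathcal{X})\|\delta\|_2^2$ to conclude
\[
\|\hat{x}_\mathrm{i}(k)-\hat{x}_\mathrm{j}(k)\|_2\le\sqrt{\lambda_{\max}(\mathcal{X})/\lambda_{\min}(\mathcal{X})}\,\alpha^k\|\hat{x}_\mathrm{i}(0)-\hat{x}_\mathrm{j}(0)\|_2 .
\]
This is Definition~\ref{def:contracting_model} with $K=\sqrt{\kappa(\mathcal{X})}$. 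The case $\bar\alpha=1$ is covered because $\alpha<1$.

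The main obstacle is the second step: passing from the pointwise strict inequality to a uniform contraction rate strictly below $\bar\alpha$. This requires the boundedness of $P^\mathrm{NL}$ and the uniform invertibility of $I-P^\mathrm{NL}D_\mathrm{zw}$, both of which come from Theorem~\ref{thm:WP} and the Lipschitz bounds. A secondary subtlety is justifying the integral form \eqref{eq:Phi_NL_FTC} for merely Lipschitz $\phi^\mathrm{NL}$. I would handle it as the paper does, assuming differentiability, or note that by Rademacher's theorem the averaged Jacobian exists almost everywhere with norm at most $L$.
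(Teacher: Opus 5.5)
Your proposal follows the paper's proof: the quadratic incremental Lyapunov function $V(\delta_{\hat{x}})=\delta_{\hat{x}}^\top\mathcal{X}\delta_{\hat{x}}$ along \eqref{eq:delta_x_map}, the one-step decay $V(\delta_{\hat{x}}(k+1))\le\alpha^2V(\delta_{\hat{x}}(k))$ for some $\alpha<\bar\alpha$, and iteration, with $K=\sqrt{\lambda_{\max}(\mathcal{X})/\lambda_{\min}(\mathcal{X})}$ written out where the paper defers to the proof of Theorem~1 in \cite{revay_recurrent_2024}. You are also right that the paper's first step (``there exists $\alpha<\bar\alpha$'') silently requires \eqref{eq:contracting_mdl_ineq} to hold with a margin that is uniform over all times and trajectory pairs, and the uniform reading you adopt is exactly the hypothesis under which both your argument and the paper's go through: pointwise strictness does not suffice (e.g.\ $\hat{x}(k+1)=\sqrt{1+\hat{x}(k)^2}$ satisfies it with $\mathcal{X}=1$, $\bar\alpha=1$, yet the trajectories from $0$ and $1$ are $\sqrt{k}$ and $\sqrt{k+1}$, whose gap decays only like $k^{-1/2}$), and your compactness route works only if strictness holds on the closure of the attainable set.
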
\vspace{-18pt}
\begin{pf}
    Note that, if \eqref{eq:contracting_mdl_ineq} holds, then there exists $\alpha<\bar{\alpha}$ such that
        \begin{equation}
        \alpha^2\mathcal{X} - \mathcal{A}^\top(z_\mathrm{i}(k), z_\mathrm{j}(k))\mathcal{X}\mathcal{A}(z_\mathrm{i}(k), z_\mathrm{j}(k)) \succeq 0.
    \end{equation}
    Then, based on \eqref{eq:delta_w_expression} we have
    \begin{equation}
        \alpha^2 V(\delta_{\hat{x}}(k)) \geq V(\delta_{\hat {x}}(k+1)),
    \end{equation}
    where $V(\delta_{\hat{x}}) = \delta_{\hat{x}}^\top\mathcal{X}\delta_{\hat{x}}$. Hence, the incremental dynamics form \eqref{eqs:incremental_dyn} is exponentially stable (in the Lyapunov sense), which implies that the augmented model \eqref{eqs:general_LFR} is contracting (see the proof of~\cite[Theorem 1]{revay_recurrent_2024}).\placeqed
\end{pf}\vspace{-15pt}
A sufficient condition for the contracting property can be obtained from Theorem~\ref{thm:contraction}, which is more suitable for deriving a stable-by-construction model parametrization.\vspace{-3pt}
\begin{cor}\label{corollary:contracting_param}
    Consider an LFR-based model augmentation structure with the assumptions as in Theorem~\ref{thm:contraction}. A sufficient condition for \eqref{eqs:general_LFR} to be contracting is
    \begin{equation}\label{eq:contraction_sufficient}
        \|\mathcal{A}(z_\mathrm{i}(k), z_\mathrm{j}(k))\|_2<\bar{\alpha}\leq 1, \quad \forall z_\mathrm{i}(k),z_\mathrm{j}(k)\in\mathbb{R}^{n_\mathrm{z}}.
    \end{equation}
\end{cor}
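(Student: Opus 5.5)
The plan is to reduce the corollary to Theorem~\ref{thm:contraction} with the particular choice $\mathcal{X}=I$. For $\mathcal{X}=I$, condition \eqref{eq:contracting_mdl_ineq} reads $\bar{\alpha}^2 I - \mathcal{A}^\top\mathcal{A}\succ 0$. This is equivalent to $\lambda_{\max}(\mathcal{A}^\top\mathcal{A})<\bar{\alpha}^2$, i.e., to $\|\mathcal{A}\|_2<\bar{\alpha}$. So if \eqref{eq:contraction_sufficient} holds for all pairs $z_\mathrm{i}(k),z_\mathrm{j}(k)$, then in particular it holds along every pair of trajectories driven by the same input. Hence \eqref{eq:contracting_mdl_ineq} holds for all $k>0$ with $\mathcal{X}=I\succ 0$. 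The hypotheses of Theorem~\ref{thm:contraction} are assumed in the corollary, so the conclusion follows.

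The matrix $\mathcal{A}(z_\mathrm{i},z_\mathrm{j})$ is well defined for every pair. Indeed, $P^\mathrm{NL}$ is an average of Jacobians of an $L$-Lipschitz map, so $\|P^\mathrm{NL}\|_2\le L$. Together with $\|D_\mathrm{zw}\|_2<1/L$ from \eqref{eq:Dzw-final-parametrization}, this gives $\|P^\mathrm{NL}D_\mathrm{zw}\|_2<1$, and a Neumann series argument shows that $I-P^\mathrm{NL}D_\mathrm{zw}$ is invertible. This is the invertibility already invoked via Theorem~\ref{thm:WP} in \eqref{eq:delta_w_expression}.

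The main obstacle is a subtlety of strictness and uniformity. Theorem~\ref{thm:contraction} as proved needs a fixed $\alpha<\bar{\alpha}$ with $\alpha^2 I-\mathcal{A}^\top\mathcal{A}\succeq 0$ uniformly in $k$. Pointwise strict inequality $\|\mathcal{A}\|_2<\bar{\alpha}$ does not by itself give a uniform margin. I would handle this in one of two ways. The first is to interpret \eqref{eq:contraction_sufficient} as holding with $\sup_{z_\mathrm{i},z_\mathrm{j}}\|\mathcal{A}\|_2=:\alpha<\bar{\alpha}$, which is what a parametrization bound delivers in practice. The second is to note that the trajectories stay in bounded sets for bounded inputs and use continuity of $\mathcal{A}$ to obtain a uniform margin. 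Failing both, one simply takes $\alpha=\sup\|\mathcal{A}\|_2\le\bar{\alpha}$. Then $\|\delta_{\hat{x}}(k+1)\|_2\le\alpha\|\delta_{\hat{x}}(k)\|_2$, and iterating gives Definition~\ref{def:contracting_model} with $K=1$, provided $\alpha<1$.
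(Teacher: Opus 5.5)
Your argument is correct and coincides with the paper's proof, which likewise just takes $\mathcal{X}=I$ in Theorem~\ref{thm:contraction}. Your equivalence $\bar{\alpha}^2 I-\mathcal{A}^\top\mathcal{A}\succ 0 \iff \|\mathcal{A}\|_2<\bar{\alpha}$ is also stated more accurately than the paper's identity $\mathcal{A}^\top\mathcal{A}=\|\mathcal{A}\|_2^2$. The uniformity issue you flag is genuine and is glossed over in the paper's proof of Theorem~\ref{thm:contraction}: your fallback already settles every $\bar{\alpha}<1$, but for $\bar{\alpha}=1$ the scalar map $x\mapsto\log(1+e^{x})$ has all averaged Jacobians in $(0,1)$ while $e^{x(k)}=e^{x(0)}+k$, so trajectories merge only like $1/k$ and drift to infinity under zero input. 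This shows that your second fix (bounded trajectories) is not available, and that the uniform bound $\sup\|\mathcal{A}\|_2<\bar{\alpha}$ of your first fix, which the parametrization in Section~\ref{sec:stable_param} delivers, is what is really needed.
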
\vspace{-18pt}
\begin{pf}
    Since $\mathcal{A}^\top(\cdot) \mathcal{A}(\cdot) = \|\mathcal{A}(\cdot)\|_2^2 <\bar{\alpha}^2$, \eqref{eq:contraction_sufficient} $\Rightarrow$ \eqref{eq:contracting_mdl_ineq} holds. E.g., select $\mathcal{X}=I$, then
    \begin{equation}
        \bar{\alpha}^2 I - \mathcal{A}^\top(z_\mathrm{i}(k), z_\mathrm{j}(k)) I \mathcal{A}(z_\mathrm{i}(k), z_\mathrm{j}(k)) \succ 0,
    \end{equation}
     for all $z_\mathrm{i}(k),z_\mathrm{j}(k)\in\mathbb{R}^{n_\mathrm{z}}$, s.t. $x_\mathrm{b}(k) \in \mathbb{X}$, $u(k) \in \mathbb{U}$.\placeqed
\end{pf}\vspace{-15pt}

\subsection{Direct parametrization of contracting LFR model augmentation structures}\vspace{-6pt}
We now provide a direct parametrization based on Corollary~\ref{corollary:contracting_param} for the matrices $A$, $B_\mathrm{w}$, and $C_\mathrm{z}$ to guarantee that the model structure \eqref{eqs:general_LFR} is contracting. Applying the triangle inequality and the sub-multiplicative property of the spectral norm to $\|\mathcal{A}(\cdot)\|_2$:
\begin{multline}\label{eq:mathcalA_factored}
    \|\mathcal{A}(\cdot)\|_2 \leq \|A\|_2 +\\ \|B_\mathrm{w}\|_2 \|\left(I_{n_w} - P^\mathrm{NL}(\cdot)D_\mathrm{zw}\right)^{-1}\|_2 \|P^\mathrm{NL}(\cdot)\|_2 \|C_\mathrm{z}\|_2.
\end{multline}
Moreover, based on the Neumann series identity~\cite{stewart_matrix_1998} and utilizing that $\|\Phi^\mathrm{NL}(k)\|_2\leq L$, the following bound can be expressed for the matrix-inverse in \eqref{eq:mathcalA_factored}:
\begin{equation}
    \|\left(I_{n_w} - P^\mathrm{NL}(\cdot)D_\mathrm{zw}\right)^{-1}\|_2 \leq \frac{1}{1 - L \|D_\mathrm{zw}\|_2}.
\end{equation}
Again, using the specified bound for $\|P^\mathrm{NL}(\cdot)\|_2$, condition~\eqref{eq:contraction_sufficient} can be modified as
\begin{equation}
    \|A\|_2 + \frac{L\|B_\mathrm{w}\|_2 \|C_\mathrm{z}\|_2}{1 - L\|D_\mathrm{zw}\|_2}<\bar{\alpha}.
\end{equation}
We introduce $X_\mathrm{A}$, $Y_\mathrm{A}$, \dots, $Z_\mathrm{C}$ free variables (with appropriate dimensions) to parametrize $\bar{A}$, $\bar{B}_w$, and $\bar{C}_z$ with the general Cayley transform, as shown in \eqref{eq:general_cayley}, hence $\|\bar{A}\|_2,\, \|\bar{B}_w\|_2,\,\|\bar{C}_z\|_2<1$. For notational convenience, the following expression is introduced:
{\setlength{\abovedisplayskip}{6pt}
 \setlength{\belowdisplayskip}{6pt}
\begin{equation}\label{eq:kappa}
    \kappa = \frac{L}{1-L\|D_\mathrm{zw}\|_2}.
\end{equation}
}\noindent
Furthermore, we assume that $D_\mathrm{zw}$ is parametrized as \eqref{eq:Dzw-final-parametrization}; thus, the value of $\kappa$ is known at each iteration. To guarantee that the condition for the contracting model parametrization is fulfilled, we introduce free scalar variables $\alpha$, $\beta$, $\gamma$, and construct $A$, $B_\mathrm{w}$, and $C_\mathrm{z}$ matrices, as
\begin{equation}
    A = \bar{\alpha}\, \sigma_A \bar{A}, \quad B_\mathrm{w} = \sigma_\mathrm{B} \bar{B}_\mathrm{w}, \quad C_\mathrm{z} = \sigma_\mathrm{C} \bar{C}_\mathrm{z},
\end{equation}
where $\sigma_A=\mathrm{sigmoid}(\alpha)$, guaranteeing that $\sigma_A\in(0,1)$, while~$\sigma_B$, and $\sigma_C$ are expressed as
\begin{align}
    \sigma_B &= \sqrt{\bar{\alpha}}\exp(\beta) \sqrt{\frac{1-\sigma_A}{\kappa}},\\
    \sigma_C &= \frac{\sqrt{\bar{\alpha}}}{\exp(\beta)} \sqrt{\frac{1-\sigma_A}{\kappa}} - \frac{\sqrt{\bar{\alpha}}\,\mathrm{sigmoid}(\gamma)}{\kappa \exp(\beta) \sqrt{\frac{1-\sigma_A}{\kappa}}}.
\end{align}
By following this parametrization the condition for contracting model structure is automatically fulfilled, since
\begin{equation}
    \|A\|_2 + \kappa \|B_\mathrm{w}\|_2 \|C_\mathrm{z}\|_2 < \bar{\alpha}\left(1 - \mathrm{sigmoid}(\gamma)\right)<\bar{\alpha}.
\end{equation}
This concludes Contribution 2. Fig.~\ref{fig:WP_contr_param} illustrates the relations between the free variables and LFR matrix components, as a compact overview of the proposed well-posed and contracting parametrizations.\vspace{-10pt}
%
%
\begin{figure}
    \centering
    \includegraphics[scale=0.85]{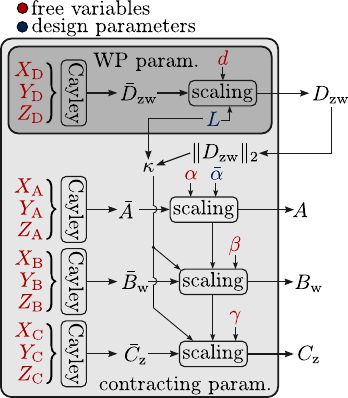}\vspace{-6pt}
    \caption{Illustrating the elements of the well-posed and contracting parametrizations.}
    \label{fig:WP_contr_param}
\end{figure}

%% file: sections/5_ident_algorithm.tex
\section{Identification algorithm}\label{sec:ident_algorithm}\vspace{-6pt}
Now that direct parametrizations for guaranteed well-posedness and stability properties of LFR-based model augmentation are presented in Sections \ref{sec:WellPosedness} and \ref{sec:stable_param}, respectively, the final step is to present an identification pipeline. First, various regularization options are introduced in this section, which can be added to the loss function \eqref{eq:cost-fun} for certain advantages. Then, finally, a highly efficient identification pipeline is adapted for the estimation of \eqref{eqs:general_LFR} that can handle the involved parameterization and also all considered regularization options.\vspace{-6pt}

\subsection{Standard $\ell_2$ and $\ell_1$ regularization}\vspace{-6pt}
To prevent overfitting, $\ell_2$ or $\ell_1$ regularization can be applied for the learning component parameters. Hence, the following terms can be added to cost function~\eqref{eq:cost-fun}:
{\setlength{\abovedisplayskip}{6pt}
 \setlength{\belowdisplayskip}{6pt}
\begin{equation}
    r_\mathrm{a}(\theta_\mathrm{a}) = \frac{\rho_\mathrm{a}^{\ell_2}}{2}\|\theta_\mathrm{a}\|_2^2 + \rho_\mathrm{a}^{\ell_1}\|\theta_\mathrm{a}\|_1,
\end{equation}
}\noindent
where $\rho_\mathrm{a}^{\ell_2}, \rho_\mathrm{a}^{\ell_1}\in\mathbb{R}_{\geq 0}$ correspond to the weights of the~$\ell_2$ and~$\ell_1$ penalties on the ANN parameters, respectively. Incorporating these terms, especially the $\ell_1$ penalty, promotes sparsity in the learning component parameters. Additionally, to avoid potential stability issues caused by tuning the initial states $\hat{x}_0$ excessively large, the following regularization term can also be applied:
{\setlength{\abovedisplayskip}{6pt}
 \setlength{\belowdisplayskip}{6pt}
\begin{equation}\label{eq:r_x0}
    r_\mathrm{x_0}(\hat{x}_0) = \frac{\rho_\mathrm{x_0}}{2}\|\hat{x}_0\|_2^2,
\end{equation}
}\noindent
where $\rho_\mathrm{x_0}\in\mathbb{R}_{\geq 0}$ denotes the $\ell_2$ regularization weight on~$\hat{x}_0$. Intuitively, \eqref{eq:r_x0} represents a prior that~$\hat{x}_0$ is small in absolute value, which is justified given a proper experiment design. Incorporating $r_\mathrm{x_0}$ into the cost function can improve the conditioning of the optimization problem by adding a positive definite contribution to the Hessian matrix of the objective function w.r.t. the initial state.\vspace{-6pt}

\subsection{Regularization of the baseline parameters}\label{sec:phys-parm-regul}\vspace{-6pt}
The LFR-based model augmentation structure is inherently overparametrized, i.e., there exist multiple parameter triplets $(\theta_\mathrm{b}^\ast, \theta_\mathrm{a}^\ast, \theta_\mathrm{LFR}^\ast)$ that minimize \eqref{eq:cost-fun}. Non-uniqueness of $\theta_\mathrm{a}^\ast$, and~$\theta_\mathrm{LFR}^\ast$ does not affect the interpretability of the resulting model; however, due to the non-uniqueness of $\theta^\ast_\mathrm{b}$, the baseline parameters may converge to unrealistic values, ultimately undermining the physical interpretability of the final model. Additionally, physics-based models are typically capable of extrapolating accurately beyond the range of measurement data, while ANNs usually showcase poor extrapolation capabilities outside of the training data set. With physically unrealistic baseline parameters, the augmented model might also lose this advantageous extrapolation capability. To address this issue, a regularization cost term can be added to \eqref{eq:cost-fun} based on \cite{bolderman_physics-guided_2024}, as
\begin{equation}\label{eq:phys_regul}
    r_\mathrm{b}(\theta_\mathrm{b}) = \dfrac{\rho_\mathrm{b}}{2} \left\|\Lambda_\mathrm{b} \left(\theta_\text{b} - \theta_\text{b}^0\right)\right\|_2^2,
\end{equation}
where $\Lambda_\mathrm{b} = \mathrm{diag}\left(\theta_\mathrm{b}^0\right)^{-1}$, and $\rho_\mathrm{b} \in\mathbb{R}_{\geq 0}$ is a trade-off parameter. 
Applying \eqref{eq:phys_regul} ensures that the baseline parameters remain close to their nominal values, while deviations are allowed when it results in a better data fit.\vspace{-6pt}

\subsection{Discovery of augmentation structures}\label{sec:augm_struct_discovery}\vspace{-6pt}
To automatically detect the most efficient model augmentation structure for the given identification problem, an $\ell_1$ regularization term can be imposed on the elements of the~$W_{\theta_\mathrm{LFR}}$ matrix, as
\begin{equation}\label{eq:W_LFR_L1_reg}
    r_\mathrm{LFR}(\theta_\mathrm{LFR}) = \rho_\mathrm{LFR} \|\Lambda_\mathrm{LFR} \mathrm{vec}(W_{\theta_\mathrm{LFR}})\|_1,
\end{equation}
where $\rho_\mathrm{LFR}\in\mathbb{R}_{\geq 0}$ is a collective trade-off parameter, while $\Lambda_\mathrm{LFR}=\mathrm{diag}(\lambda_\mathrm{LFR}^1,\lambda_\mathrm{LFR}^2,\dots,\lambda_\mathrm{LFR}^{n_\mathrm{W}})$ collects the regularization weights associated with each entry of $W_{\theta_\mathrm{LFR}}$. The coefficients $\lambda_\mathrm{LFR}^i\in\mathbb{R}_{\geq0}$ control the magnitude of the $\ell_1$ penalty term regarding each entry in $W_{\theta_\mathrm{LFR}}$ and determine the trade-off between model complexity (in terms of augmentation structure) and data fit. A simple choice is to set all weights uniformly. A more systematic way is to employ the iterative reweighting scheme according to \cite{candes_enhancing_2008}, summarized as follows:\vspace{-2mm}
\begin{enumerate}
    \item Set $\tau{=}0$ and initialize $\lambda_\mathrm{LFR}^i(\tau){=}0$ for $i{=}1,\dots,n_\mathrm{W}$.
    \item Solve the optimization problem \eqref{eq:optim-problem} with \eqref{eq:W_LFR_L1_reg} using $\Lambda_\mathrm{LFR}(\tau)=\mathrm{diag}(\lambda_\mathrm{LFR}^1(\tau),\dots,\lambda_\mathrm{LFR}^{n_\mathrm{W}}(\tau))$, yielding the estimate $\theta(\tau)=\mathrm{vec}(\theta_\mathrm{b}(\tau),\theta_\mathrm{a}(\tau),\theta_\mathrm{LFR}(\tau))$.
    \item Update the weights for $i=1,\dots, n_\mathrm{W}$ according to
    \begin{equation}
        \lambda_\mathrm{LFR}^i(\tau+1) = \frac{1}{\vert [\mathrm{vec}({W_{\theta_\mathrm{LFR}(\tau)})]_i\vert+\varepsilon}},
    \end{equation}
    with a constant $0<\varepsilon\ll 1$ to avoid singularities. 
    \item Terminate if the weights converge or when a prescribed maximum number of iterations is reached; otherwise, increment $\tau$ and return to Step~2.
    \item Fix all elements in $W_{\theta_\mathrm{LFR}}$ to zero for which the iteration converged to a value that is smaller in absolute value than $\varepsilon$. Then, re-estimate model \eqref{eqs:general_LFR} with the sparsified $W_{\theta_\mathrm{LFR}}$ matrix and $\rho_\mathrm{LFR}=0$. 
\end{enumerate}\vspace{-2mm}
Ideally, an $\ell_0$ penalty would be applied to promote sparsity of the $W_{\theta_\mathrm{LFR}}$ matrix, but it is often computationally challenging to exactly solve the $\ell_0$ regularization task. Instead, typically $\ell_1$ regularization is applied as a substitute. On the other hand, uniform $\ell_1$ regularization penalizes the magnitude of each parameter, introducing bias in sparsification. The iterative reweighting strategy aims to resolve this issue. 
By appropriate hyperparameter selection, it recovers the underlying sparse components of $W_{\theta_\mathrm{LFR}}$, i.e., detects the redundant components in the LFR matrix, and facilitates automatic augmentation structure discovery. 
Naturally, the collective trade-off parameter $\rho_\mathrm{LFR}$ influences the overall effect of the sparsity penalty on the estimation. Selecting an appropriate value might not be intuitive; hence, we propose the following rule of thumb:
{\setlength{\abovedisplayskip}{6pt}
 \setlength{\belowdisplayskip}{6pt}
\begin{equation}\label{eq:rho_LFR_rule_of_thumb}
    \rho_\mathrm{LFR} = \varepsilon V_{\mathcal{D}_N}^\mathrm{base},
\end{equation}
}\noindent
where $V_{\mathcal{D}_N}^\mathrm{base}$ is the value of the cost function \eqref{eq:cost-fun} evaluated using only the baseline model. With this $\rho_\mathrm{LFR}$ selection, the regularization term $r_\mathrm{LFR}$ remains in a similar order of magnitude as $V_{\mathcal{D}_N}$, and the optimization can focus on improving the data-fit while simultaneously penalizing the redundant elements of $W_{\theta_\mathrm{LFR}}$.\vspace{-6pt}

\subsection{Model order selection}\vspace{-6pt}
The dimensions of the latent variables $z_\mathrm{a}$, and $w_\mathrm{a}$ should be selected carefully. By increasing $n_{\mathrm{z_a}}$ and $n_{\mathrm{w_a}}$, the capability of the model to capture nonlinear effects can be enhanced. Similarly, determining the dimension of the augmented state $x_\mathrm{a}$ is not intuitive. Based on physical knowledge, a good baseline value can often be provided for $n_{\mathrm{z_a}}$, $n_\mathrm{w_a}$, and $n_\mathrm{x_a}$. 
Alternatively, by adapting the methodology discussed in \cite{bemporad_l-bfgs-b_2025}, a \emph{group-lasso} penalty can be applied to select the dimensions of $z_\mathrm{a}$, $w_\mathrm{a}$, and $x_\mathrm{a}$, as
{\setlength{\abovedisplayskip}{6pt}
 \setlength{\belowdisplayskip}{6pt}
\begin{multline}\label{eq:group-lasso}
    r_\mathrm{g}(\theta, \hat{x}_0) = \rho_\mathrm{z_a} \sum_{i=1}^{n_{\mathrm{z_a}}} \left\|\theta_{\mathrm{z_a},i}\right\|_2 + \rho_\mathrm{w_a} \sum_{i=1}^{n_{\mathrm{w_a}}} \left\|\theta_{\mathrm{w_a},i}\right\|_2 + \\
    \rho_\mathrm{x_a} \sum_{i=1}^{n_{\mathrm{x_a}}} \left\|\theta_{\mathrm{x_a},i}\right\|_2,
\end{multline}
}\noindent
where $\rho_\mathrm{z_a}, \rho_\mathrm{w_a}, \rho_\mathrm{x_a} \geq 0$ are regularization weights, and $\theta_{\mathrm{z_a},i},\, \theta_{\mathrm{w_a},i}, \,\theta_{\mathrm{x_a},i}$ all represent groups of parameters that correspond to the specific dimension of $z_\mathrm{a},\, w_\mathrm{a},\, x_\mathrm{a}$, respectively. For example, when $\rho_\mathrm{w_a}>0$, $\theta_{\mathrm{w_a},i}$ collects all parameters associated with the $i$\textsuperscript{th} dimension of $w_\mathrm{a}$: the $i$\textsuperscript{th} row of and $i$\textsuperscript{th} element of the weight and bias of the last layer in the learning component, and the $i$\textsuperscript{th} columns of $D_\mathrm{zw}^\mathrm{ba}$, $D_\mathrm{zw}^\mathrm{aa}$, $B_\mathrm{w}^\mathrm{ba}$, $B_\mathrm{w}^\mathrm{aa}$, $D_\mathrm{yw}^\mathrm{ba}$, and $D_\mathrm{yw}^\mathrm{aa}$. The basic idea of the group-lasso regularization in \eqref{eq:group-lasso} is to see whether specific dimensions of $z_\mathrm{a}$, $w_\mathrm{a}$ or $x_\mathrm{a}$ can be eliminated by zeroing out the corresponding parameters, therefore automating the process of model order selection. Although~\eqref{eq:group-lasso} is formulated for all three mentioned variables, in practice, it is advised to apply group-lasso regularization for only one variable at a time and repeat the process until finding appropriate dimensions for all $z_\mathrm{a}$, $w_\mathrm{a}$, and $x_\mathrm{a}$. Practical advice about applying the proposed approach will be given in Section~\ref{sec:group-lasso-example}.\vspace{-6pt}

\subsection{Data normalization and parameter initialization}\vspace{-6pt}
IO normalization is required for efficient training of artificial neural networks,  mainly to avoid exploding and vanishing gradients \cite{lecun_efficient_1998}. For black-box ANN-SS models, the data normalization process is usually implemented before the optimization, and back-scaling is applied after each model evaluation to compare the results with the true system outputs~\cite{beintema_deep_2023}. However, this is not applicable for grey-box approaches, as the baseline model uses the physically true, unscaled data. Hence, we apply similar transformations as discussed in~\cite{schoukens_initialization_2020} to scale and back-scale the baseline component. For more details, refer to~\cite{hoekstra_learning-based_2025}. Besides IO normalization, a reliable parameter initialization scheme is also important to potentially improve accuracy and decrease convergence time, since fully random initialization might cause stability losses at the beginning of the optimization. To address these challenges, a similar initialization method is applied as in \cite{hoekstra_learning-based_2025}, which guarantees that the augmentation structure behaves like the FP model at the beginning of training, as it is assumed to provide reasonably accurate results.\vspace{-6pt}

\subsection{Optimization algorithm}\vspace{-6pt}
To handle $\ell_1$-regularization and non-smooth cost terms, we adapt the methodology described in \cite{bemporad_l-bfgs-b_2025} by splitting the model parameters as $\theta=\theta_+ - \theta_-$, such that $\theta_+, \theta_- \geq 0$. Moreover, we apply the L-BFGS-B optimization algorithm \cite{byrd_limited_1995} to solve \eqref{eq:optim-problem}. For warm-starting the optimization, a fixed number of Adam \cite{kingma_adam_2015} gradient-descent steps are also applied to prevent converging to local minima. 
For computing the gradients w.r.t. $\theta$ and $\hat{x}_0$, a highly efficient automatic differentiation tool, namely the {\sc JAX} library~\cite{bradbury_jax_2018}, is used. A key challenge is the backpropagation through the \emph{fixed-point iterations} (FPI) performed when evaluating~\eqref{eqs:general_LFR}. The FPI in Proposition~\ref{prop:FPI} is implemented to stop after $z_n$ changes less than a user-specified tolerance value $\varepsilon_\mathrm{FPI}\ll 1$ or the number of iterations exceeds a pre-defined limit $n_\mathrm{max}$.  
Then, we utilize the implicit function theorem and automatic implicit differentiation~\cite{blondel_efficient_2022} to directly calculate gradients of the fixed-point $z_\ast(k)$ w.r.t. model parameters. The proposed regularization options and the adaptation of the efficient identification pipeline from \cite{bemporad_efficient_2025} into the LFR-based model augmentation setting conclude Contribution~3 of the paper.\vspace{-6pt}

\subsection{Initial states for testing}\label{sec:init_states_testing}\vspace{-6pt}
For testing the trained models on a new data set $\tilde{\mathcal{D}}_{N_\mathrm{test}} = \{(\tilde{y}(i), \tilde{u}(i))\}_{i=0}^{N_\mathrm{test}-1}$, an initial state estimate is necessary to start the simulation. Throughout this paper, we solve the following optimization problem to estimate the initial state of the test data:
{\setlength{\abovedisplayskip}{6pt}
 \setlength{\belowdisplayskip}{6pt}
\begin{equation}\label{eq:optim-problem-x0}
    \min_{\tilde{x}_0} \quad \dfrac{1}{N_\mathrm{i}}\sum_{k=0}^{N_\mathrm{i}} \left\|\tilde{y}(k) - \hat{y}(k)\right\|_2^2,
\end{equation}
}\noindent
where $N_\mathrm{i} \leq N_\mathrm{test}$ is the length of the state initialization window, while $\hat{y}(k)$ is the model output obtained by simulating the trained model structure \eqref{eqs:general_LFR} from $\tilde{x}_0$. We solve \eqref{eq:optim-problem-x0} with the L-BFGS algorithm, and an initial guess for the optimization is provided by running an \emph{Extended Kalman Filter}~(EKF) and \emph{Rauch-Tung-Striebel}~(RTS) smoothing backward in time~$N_\mathrm{e}$ times. For $N_\mathrm{i} = N_\mathrm{test}$, the methodology is the same as in~\cite{bemporad_l-bfgs-b_2025}.\vspace{-6pt}

%% file: sections/6_num_examples.tex
\section{Examples}\label{sec:num_examples}\vspace{-6pt}
We apply the discussed model augmentation approach on various examples to demonstrate the efficiency of the proposed methodology. 
Code implementation and experimental data are available on GitHub\footnote{\scriptsize \url{https://github.com/AIMotionLab-SZTAKI/StableLFRaugmentation}}.\vspace{-6pt}

\subsection{F1Tenth simulation example}\vspace{-6pt}
We evaluate the proposed approaches by identifying the dynamics of a small-scale electric vehicle (the F1Tenth car \cite{agnihotri_teaching_2020}). To analyze the methods through various noise levels, data is generated with a high-fidelity digital twin of the car. Then, i.i.d. Gaussian white noise is added to reach specific \emph{signal-to-noise ratio} (SNR) values for the training data, while test data is kept noise-free. 
Data were collected from multiple experiments using circular and lemniscate trajectories at various speeds, yielding $N=15985$ samples, split evenly between training and testing (see \cite{gyorok_orthogonal_2025} for details).
%

The digital twin is treated as a black box, so the true dynamics are unknown. Instead, we use a simplified mechanical model based on the single-track representation, which approximates the velocity dynamics.\footnote{\scriptsize Only the velocity dynamics $(v_\xi, v_\eta, \omega)$ are considered for augmentation. After identification, the position and orientation values can be computed using the integrator dynamics of the single-track representation. See \cite{gyorok_orthogonal_2025} for more details.} The inputs are the steering angle $\delta(k)$ and the PWM percentage $d(k)$ of the electric motor, while the states are 
the longitudinal and lateral velocities, $v_\xi(k)$ and $v_\eta(k)$, and the yaw rate $\omega(k)$. The baseline model is expressed as
\begin{subequations}
{\setlength{\abovedisplayskip}{6pt}
 \setlength{\belowdisplayskip}{0pt}
\begin{multline}
    v_\xi(k+1) = v_\xi(k) + \frac{T_\mathrm{s}}{m}(F_\xi + F_\xi \cos\delta(k) -\\ F_\mathrm{f,\xi}\sin\delta(k)-mv_\eta(k)\omega(k)),
\end{multline}}
{\setlength{\abovedisplayskip}{0pt}
 \setlength{\belowdisplayskip}{0pt}
\begin{multline}
    v_\eta(k+1) = v_\eta(k) + \frac{T_\mathrm{s}}{m}(F_\mathrm{r,\eta} + F_\xi\sin\delta(k)-\\F_\mathrm{f,\eta}\cos\delta(k)-mv_\xi(k)\omega(k)),
\end{multline}}
{\setlength{\abovedisplayskip}{0pt}
 \setlength{\belowdisplayskip}{6pt}
\begin{multline}
    \omega(k+1) = \omega(k) + \frac{T_\mathrm{s}}{J_\mathrm{z}}(F_\mathrm{f,\eta} l_\mathrm{f}\cos\delta(k)+\\F_\xi l_\mathrm{f}\sin\delta(k)-F_\xi l_\mathrm{r}),
\end{multline}
}\noindent
\end{subequations}
where $T_\mathrm{s}=0.025~\mathrm{s}$ is the sampling time, $F_\xi$, $F_\mathrm{f,\eta}$, $F_\mathrm{r,\eta}$ denote the tire forces, which are computed based on a semi-empirical tire model (see~\cite{gyorok_orthogonal_2025} for more details). These force terms depend on the state and input values that are omitted for simplicity. Furthermore, $m$ is the mass of the vehicle, $J_\mathrm{z}$ is the inertia along the vertical axis, $l_\mathrm{r}$ and $l_\mathrm{f}$ are the distances of the rear and front axis from the center of gravity. Additionally, the tire model contains 5 more parameters; hence, altogether the baseline model has 9 parameters that are co-estimated with the learning component, i.e., $\theta_\mathrm{b}\in\mathbb{R}^9$. All states are directly measured in the high-fidelity simulation environment as it would be for the real F1Tenth car, i.e., $\hat{y}_\mathrm{b}(k)=x_\mathrm{b}(k)$ output equation is applied for the baseline model. 

Key learning settings are summarized in Table~\ref{tab:F1Tenth_hyperparams}. 
Following \cite{hoekstra_learning-based_2025}, two scenarios are considered: static augmentation ($n_{\mathrm{x_a}}=0$), and dynamic augmentation ($n_{\mathrm{x_a}}=2$). Since the baseline states are measured, $\hat{x}_0$ is not included in the optimization variables for static augmentation, only in the dynamic case. Consequently, initial state estimation for testing (see Section~\ref{sec:init_states_testing}) is only required for dynamic augmentation. For this purpose, we apply $N_\mathrm{i}=12$ based on \cite{szecsi_deep_2024}. Latent dimensions are set to $n_{\mathrm{z_a}}=4$ , $n_{\mathrm{w_a}}=3+n_{\mathrm{x_a}}$. 
In addition to the proposed well-posedness (denoted as WP in the remainder) and contracting parametrization, we evaluate two strategies from~\cite{hoekstra_learning-based_2025}: 
when $D_{zw} \equiv 0$ and when $D_{zw}^\mathrm{ab}$ is tuned, but the other elements of $D_{zw}$ are set to zero. All models are trained with the same hyperparameters from 10 initializations, and the best-performing model on the training set is selected for each parametrization.
\begin{table}
    \centering
    \scriptsize
    \renewcommand{\arraystretch}{1.1}
    \caption{Learning-related options for the F1Tenth example.}
    \label{tab:F1Tenth_hyperparams}
    \begin{tabular}{lc}
    \hline
        Parameter name & Value\\
    \hline
        ANN structure & 1 hidden layer, 128 nodes\\
        Activation function & \emph{hyperbolic tangent} (tanh)\\
        Adam epochs & 4000\\
        L-BFGS-B epochs & 5000\\
        Lipschitz bound $L$ & 2.5\\
        Lipschitz regul. coeff. $\rho_\mathrm{L}$ & 0.1\\
        FPI max. iterations $n_\mathrm{max}$ & 10\\
        FPI tolerance $\varepsilon_\mathrm{FPI}$ & $10^{-5}$\\
        Baseline param. regul. coeff. $\rho_\mathrm{b}$ & 0.005\\
    \hline
    \end{tabular}\vspace{-1mm}
\end{table}

Test results are summarized in Table~\ref{tab:f1tenth_results}, using the \emph{normalized root-mean-squared error} (NRMSE) metric. The baseline model with $\theta_\mathrm{b}^0$ achieves 37.91\% test NRMS error; compared to which all LFR-based structures result in significant improvements. The simplest approach for guaranteeing well-posedness, namely fixing $D_{zw}=0$, is consistently outperformed by the other two methods. This is especially visible for the dynamic case. This simplest strategy can be interpreted as a hybrid additive-multiplicative augmentation structure; thus, it is visible that the more general LFR-based approaches with a non-zero $D_\mathrm{zw}$ matrix outperform the conventional augmentation structures. Tuning only $D_\mathrm{zw}^\mathrm{ab}$ resulted in similar model accuracies compared to the WP parametrization; however, the latter applies significantly fewer constraints on the $D_\mathrm{zw}$ matrix, hence it can be tuned more freely. This can be the reason why the WP parametrization method performs better during higher noise levels. The more rigid parametrization with triangular $D_\mathrm{zw}$ makes it harder to separate the true dynamics and the noise. The contracting model parametrization does not achieve as high test accuracy as the other methods, due to the stable-by-design parametrization being restrictive, i.e., not all stable models can be characterized by this method. In return, the contracting LFR-based structure guarantees stability by design. Nonetheless, a high performance gain is still achieved compared to the baseline model. Moreover, these test errors are comparable to results with the standard static additive model augmentation structure\footnote{\scriptsize That paper applied a slightly different objective function, but the similarity of the identification task allows for comparison.} in \cite{gyorok_orthogonal_2025}, which achieved 3.84\% NRMSE on the same test set with 30 dB SNR levels during training. Therefore, a slight reduction in model accuracy can be considered an acceptable trade-off for the added benefits of the contracting property. A similar trade-off has been observed for contracting model parametrizations in the black-box system identification setup~\cite{revay_recurrent_2024}.
\begin{table}
    \centering
    \scriptsize
    \renewcommand{\arraystretch}{1.1}
    \caption{Test results on the F1Tenth identification example with different noise levels regarding the training data set.}
    \label{tab:f1tenth_results}
    \begin{tabular}{lccccc}
    \hline
        Type & Parametr. & None & 40 dB & 30 dB & 20 dB\\
    \hline
        \multirow{4}{*}{Static} & $D_{zw}\equiv 0$ & {\bf 1.96\%} & 2.02\% & 2.48\% & 4.67\%\\
         & $D_{zw}^\mathrm{ab}$ tuned & 1.97\% & {\bf 1.96\%} & {\bf 2.27\%} & 4.35\%\\
        & WP & {\bf 1.96\%} & 2.06\% & 2.57\% & {\bf 4.07\%}\\
         & contract. & 4.61\% & 4.44\% & 4.12\% & 4.27\%\\
    \hline
        \multirow{4}{*}{Dyn.} & $D_{zw}\equiv 0$ & 1.56\% & 1.29\% & 1.68\% & 3.03\%\\
        & $D_{zw}^\mathrm{ab}$ tuned & {\bf 1.09\%} & {\bf 1.10\%} & 1.47\% & 2.91\%\\
        & WP & 1.16\% & 1.25\% & {\bf 1.43\%} & {\bf 2.75\%}\\
        & contract. & 2.17\% & 2.19\% & 2.24\% & 3.52\%\\
    \hline
    \end{tabular}\vspace{-8pt}
\end{table}

The training times\footnote{\scriptsize Using Python~3.10 on a system with an Intel Core i7-13700HX processor and 16~GB of RAM.} for each model parametrization are reported in Fig.~\ref{fig:training_times}, based on a Monte Carlo study with 10 independent runs (all with different initialization). The results refer to the dynamic structures trained under the noiseless setting, as excluding the augmented states or training at different SNR levels did not significantly affect the training times. The WP and contracting parametrizations exhibit increased computational cost due to the required fixed-point iterations and the numerically expensive Cayley transform. Nevertheless, a state-of-the-art DT ANN-SS model on a similar identification problem \cite{szecsi_deep_2024} required approximately 15 minutes of training, which is on par with the WP-LFR model. Moreover, the simplified $D_\mathrm{zw}$ structures yield noticeably reduced training times compared to the black-box parametrization, highlighting the benefits of the utilized identification pipeline. When compared to continuous-time ANN-SS models, which required roughly two hours of training in \cite{szecsi_deep_2024}, even the contracting LFR-based parametrization demonstrates a substantial reduction in training time.
\begin{figure}
    \centering
    \includegraphics{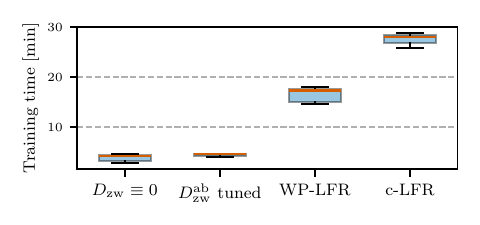}
    \vspace{-12pt}
    \caption{Training times of the dynamic LFR-based model augmentation structures with different parametrizations. The results are averaged over 10 Monte Carlo runs.}\vspace{-1mm}
    \label{fig:training_times}
\end{figure}
%
%

To highlight the main advantage of the contracting parametrization, we have simulated the WP and contracting LFR-based models from various initial conditions, some even outside of the state space covered by the training set, on an arbitrary test trajectory. The simulated $v_\xi$ values are shown in Fig.~\ref{fig:F1Tenth_stab}. Remarkably, the contracting model responses all converge to the nominal trajectory within 2 seconds of simulation time. For certain initial points, a similar trend can be seen for the WP structure. However, on many occasions (where $\hat{x}_0$ lay outside the training regime), the WP structure has shown slower convergence to the nominal model response compared to the contracting model; moreover, static and dynamic stability losses are both visible.\vspace{-6pt}
\begin{figure}
    \centering
    \includegraphics{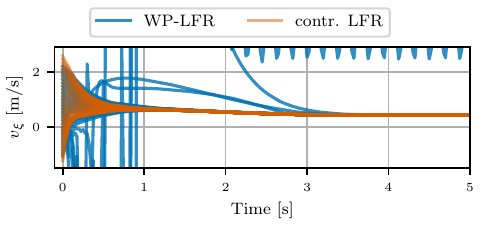}\vspace{-12pt}
    \caption{WP and contracting LFR-based structures simulated on an arbitrary test trajectory from various initial conditions.}\vspace{-1mm}
    \label{fig:F1Tenth_stab}
\end{figure}

\subsection{Cascaded Tanks benchmark example}\vspace{-6pt}
The \textit{Cascaded Tanks with Overflow}~(CCT) benchmark~\cite{schoukens_three_2017} is a two-tank system where a pump feeds the upper tank, and water flows into the lower tank through a small opening. The input signal controls the pump, and the measured output is the water level of the lower tank. Both tanks can overflow, introducing saturation. Moreover, when the upper tank overflows, excess water may spill into the lower tank, leading to further complexity. The benchmark contains two datasets, one for estimation and one for testing, with measurements of 1024 samples each at a sampling time of $T_\mathrm{s}=4~\mathrm{s}$.

We use a first-principles model based on the Bernoulli effect that neglects the overflow; hence, the baseline model dynamics can be expressed as
\begin{subequations}\label{eq:CCT_baseline}
{\setlength{\abovedisplayskip}{6pt}
 \setlength{\belowdisplayskip}{6pt}
\begin{align}
    \hat{x}_1(k+1) &= \hat{x}_1(k) + T_\mathrm{s} \left(-k_1 \sqrt{\hat{x}_1(k)} + k_2 u(k)\right),\\
    \hat{x}_2(k+1) &= \hat{x}_2(k) + T_\mathrm{s}\left(k_3 \sqrt{\hat{x}_1(k)} - k_4 \sqrt{\hat{x}_2(k)}\right),
\end{align}
}\noindent
\end{subequations}
where $u(k){\in}\mathbb{R}$ is the input, $\hat{x}_1(k),\hat{x}_2(k){\in}\mathbb{R}$ are the baseline states. The output equation is given as $\hat{y}_\mathrm{b}(k) = \hat{x}_2(k)$. The baseline model parameters $\theta_\mathrm{b} = [k_1\, k_2\, k_3\, k_4]^\top$ are co-estimated with other model parameters and initialized as $k_1^0=k_2^0=k_3^0=k_4^0=0.05$.

No additional states are introduced ($n_{\mathrm{x_a}}=0$) since the baseline states provide sufficient information. By trial and error, we select $n_{\mathrm{z_a}}=3$ and $n_{\mathrm{w_a}}=2$, with the learning component being a single-hidden-layer ANN with 16 neurons. The baseline model is estimated to be 1-Lipschitz, but a $L=2$ bound is applied for robustness. 
For the FPI, $n_\mathrm{max}=8$ and $\varepsilon_\mathrm{FPI}=10^{-6}$ are chosen. Regularization is applied with $\rho_\mathrm{b}=10^{-2}$ and $\rho_\mathrm{a}^{\ell_2}=10^{-5}$ to avoid unrealistic baseline parameters and overfitting. The contracting parametrization has shown slower convergence on this example; thus, hyperparameters have been separately tuned for each structure (see Table~\ref{tab:CT_hyperparams}).
\begin{table}
    \centering
    \scriptsize
    \renewcommand{\arraystretch}{1.1}
    \caption{Hyperparameters for the Cascaded Tanks example.}
    \label{tab:CT_hyperparams}
    \begin{tabular}{lcc}
    \hline
        Hyperparameter & WP param. & contract. param.\\
    \hline
        ANN activation fun. & swish\tablefootnote{\scriptsize By "swish", we refer to $\mathrm{swish}(x)=x \cdot\mathrm{sigmoid}(x)$.} & tanh\\
        Lipschitz regul. coeff. $\rho_\mathrm{L}$ & $10^{-3}$ & 1\\
        Adam epochs & 1000 & 2000\\
        L-BFGS-B epochs & 100 & 100\\
    \hline
    \end{tabular}\vspace{-1mm}
\end{table}

The initial states $\hat{x}_0$ are co-estimated with model parameters during training and initialized as $\hat{x}_1(0) = \hat{x}_2(0) = y(0)$, following~\cite{donati_combining_2025}. For testing, an initialization window of $N_\mathrm{i}=50$ is used to estimate the initial states, and performance is reported using the RMSE metric after excluding the first $N\mathrm{i}$ samples.\footnote{\scriptsize See benchmark rules: \url{https://www.nonlinearbenchmark.org/}} Results are shown in Table~\ref{tab:CCT_results}, including prior methods from the literature. Only two approaches ~\cite{dehkordi_normalization_2026,rogers_grey_2017} outperform our method, both relying on a realistic physics-based overflow model\footnote{\scriptsize The results in \cite{dehkordi_normalization_2026} rely on a novel normalization scheme developed for gray-box models, which could in principle be incorporated into our approach to further improve its performance. 
}, i.e., a far more accurate baseline model than \eqref{eq:CCT_baseline}. Notably, the WP-LFR model structure performs better than all black-box methods, and also~\cite{donati_combining_2025}, which uses an additive augmentation structure with a novel regularization scheme. This highlights an important advantage of the LFR-based structure, that it 
can find the best-performing model augmentation formulation during model learning. Simulation results on the test data are shown in Fig.~\ref{fig:CCT_output}. The contracting parametrization is less accurate due to the trade-off between guaranteed stability and model accuracy. Nonetheless, it remains close to conventional black-box methods such as LSTM.

This benchmark employs less training data and involves dynamics that can be captured with lower-complexity models compared to the F1Tenth example, making the efficiency of the adapted identification pipeline particularly evident. Training the WP parametrization takes about 12 seconds, while the contracting LFR-based structure converges in roughly 28 seconds. These training times are remarkably short for models with stability-enforcing parametrizations, highlighting the computational efficiency of the proposed approach.\vspace{-6pt}
\begin{table}
    \centering
    \scriptsize
    \renewcommand{\arraystretch}{1.1}
    \caption{Test RMSE on the CCT benchmark.}\vspace{2pt}
    \label{tab:CCT_results}
    \begin{tabular}{lc}
    \hline
        Method & RMSE\\
    \hline
        LSTM \cite{champneys_baseline_2024} & 0.45\\
        OLSTM \cite{champneys_baseline_2024} & 0.43\\
        PNARX \cite{champneys_baseline_2024} & 0.42\\
        dynoNet \cite{forgione_dynonet_2021} & 0.42\\
        DT SUBNET \cite{beintema_deep_2023} & 0.37\\
        CT SUBNET \cite{beintema_continuous-time_2022}  & 0.31\\
        Off-white model (physics-based regul.) \cite{donati_combining_2025} & 0.26\\
        Gray-box ANN-SS (iterative, trainable norm.) \cite{dehkordi_normalization_2026} & 0.22\\
        Grey-Box (physical overflow model)
        \cite{rogers_grey_2017} & 0.18\\
    \hline
        Applied initial baseline model & 2.62\\
        Contracting LFR model augmentation & 0.64\\
        WP LFR model augmentation & 0.25\\
    \hline
    \end{tabular}\vspace{-6pt}
\end{table}
\begin{figure}
    \centering
    \includegraphics{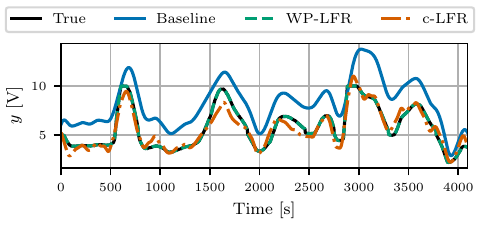}
    \vspace{-10pt}
    \caption{Simulation results on the CCT benchmark test data.}\vspace{-1mm}
    \label{fig:CCT_output}
\end{figure}

%

\subsection{Model augmentation structure discovery}\label{sec:augm_struct_discovery_example}\vspace{-6pt}
With the following example, we demonstrate the reweighted $\ell_1$ regularization scheme outlined in Section~\ref{sec:augm_struct_discovery} to automatically discover the underlying model augmentation structure. 
We consider the following DT \emph{mass-spring-damper} (MSD) data-generating system:
\begin{subequations}\label{eq:MSD}
{\setlength{\abovedisplayskip}{6pt}
 \setlength{\belowdisplayskip}{6pt}
\begin{align}
    x_1(k+1) &= x_1(k) + T_\mathrm{s} x_2(k),\\
    x_2(k+1) &= x_2(k) + \frac{T_\mathrm{s}}{m} (-k_\mathrm{s} x_1(k) + c_\mathrm{d} x_2(k) -\\ &\quad\quad F_\mathrm{fric}(x_2(k)) + u(k)),\nonumber
\end{align}
}\noindent
\end{subequations}
where $x_1(k)$ represents the measured position, i.e., $y(k) = x_1(k)$, and $x_2(k)$ is the velocity, $u(k)$ is an auxiliary force term, representing the input, $m$ is the mass, $k_\mathrm{s}$ is the spring constant, $c_\mathrm{d}$ is the damping coefficient and $T_\mathrm{s}$ is the sampling time. Furthermore, $F_\mathrm{fric}$ is a nonlinear friction term consisting of Stribeck-like and viscous components. The data-generating system uses $m=1$~kg, $k_\mathrm{s}=10^3~\mathrm{N/m}$, $c_\mathrm{d}=90~\mathrm{Ns/m}$, and $T_\mathrm{s}=0.01$~s. The baseline model is constructed by removing $F_\mathrm{fric}$ from \eqref{eq:MSD}, and all physical parameters are assumed to be known; hence, $\theta_\mathrm{b}$ is not co-estimated with the other parameters. Training data consists of $10^4$ samples generated with $u_k\sim\mathcal{U}(-100, 100)$, and $10^3$ additional samples are used for testing with the same input signal distribution. The WP-LFR structure is trained using the iterative reweighting scheme from Section~\ref{sec:augm_struct_discovery}. The learning component is an ANN with 1 hidden layer and 8 neurons, using the tanh activation. No augmented states are used ($n_{\mathrm{x_a}}=0$), while $n_\mathrm{z_a}$ and $n_\mathrm{w_a}$ are both set to 1. Fixed-point iterations use $n_\mathrm{max}=10$ and $\varepsilon_\mathrm{FPI}=10^{-5}$. The baseline Lipschitz constant is $10.05$, and regularization is applied with $\rho_\mathrm{L}=0.1$. The coefficient $\rho_\mathrm{LFR}$ is set via \eqref{eq:rho_LFR_rule_of_thumb}. Each reweighting step applies 500 Adam epochs followed by 1000 L-BFGS-B epochs. The scheme converged rapidly in just 4 iterations, sparsifying 70\% of the $W_{\theta_\mathrm{LFR}}$ matrix. 
The reweighting algorithm is applied with $\varepsilon=10^{-3}$, and remarkably, it yields a sparse, interpretable augmentation structure. For the output function, only the baseline output map $h_{\theta_\mathrm{b}}^\mathrm{FP}$ is utilized; all other components are zeroed out. For the state transition, 
a simple additive structure emerged, as
{\setlength{\abovedisplayskip}{6pt}
 \setlength{\belowdisplayskip}{6pt}
\begin{multline}\label{eq:MSD_augm_struct}
    \hat{x}(k+1) = A \hat{x}(k) + B_\mathrm{u} u(k) + \tilde{B}_\mathrm{w}^\mathrm{b} f_{\theta_\mathrm{b}}^\mathrm{FP}(z_\mathrm{b}(k)) +\\ \tilde{B}_\mathrm{w}^\mathrm{a} f_{\theta_\mathrm{a}}^\mathrm{ANN}(\hat{x}_2(k)),
\end{multline}
}\noindent
where the linear parts of the unmodeled terms have been (partly) encoded into the $A$ matrix, and otherwise it was correctly identified that the unmodeled friction effects only depend on the velocity state, i.e.,~$\hat{x}_2(k)$. Due to overlap in representation between the baseline model and the linear parts of $W_{\theta_\mathrm{LFR}}$, $B_\mathrm{u}$ is present in~\eqref{eq:MSD_augm_struct}, however, $\tilde{B}_\mathrm{w}$ approximately satisfies the physical intuition, as $\tilde{B}_\mathrm{w}^\mathrm{b}\approx \left[I_2\, 0_{2\times 1}\right]$ and $\tilde{B}_\mathrm{w}^\mathrm{a}\approx \left[0\,1\right]^\top$. The input of the baseline model also fulfills the expectations, as $z_\mathrm{b}\approx \mathrm{vec}(\hat{x}_1(k), \hat{x}_2(k), u(k))$. 
Additionally, the method detected that the feedback loop is not required, i.e.,~$D_\mathrm{zw}$ was set to zero. Table~\ref{tab:augm_struct_discovery} shows the test accuracy, compared with the baseline model and with the fully parametrized $W_{\theta_\mathrm{LFR}}$ matrix. The applied reweighted regularization scheme not only enhanced physical interpretability but also improved model accuracy.\vspace{-6pt}
%
%
\begin{table}
    \centering
    \scriptsize
    \renewcommand{\arraystretch}{1.1}
    \caption{Test results on the MSD example achieved with and without the automatic model augmentation structure discovery.}
    \label{tab:augm_struct_discovery}
    \begin{tabular}{lc}
    \hline
        Model & Test NRMSE\\
    \hline
         Baseline & 32.11\%\\
         Fully-parametrized WP-LFR & 0.47\%\\
         {\bf Sparse WP-LFR} & {\bf 0.07\%}\\
     \hline
    \end{tabular}\vspace{-1mm}
\end{table}

\subsection{Model order selection}\label{sec:group-lasso-example}\vspace{-6pt}
One could argue that choosing $n_{\mathrm{z_a}}=n_{\mathrm{w_a}}=1$ and $n_\mathrm{x_a}=0$ in the previous example relies on prior physical intuition, which may not always be available. Hence, we demonstrate group-lasso regularization for model order selection on the same identification problem using the same data, baseline model, and learning component parametrization. As the reweighted $\ell_1$ regularization scheme detected that a feedback loop is unnecessary, we set $D_\mathrm{zw}\equiv 0$. Models are trained for 500 Adam epochs and 1000 L-BFGS-B epochs. As a starting point, we set $n_{\mathrm{x_a}}=2$, $n_{\mathrm{w_a}}=3$, and $n_{\mathrm{z_a}}=3$, and follow the group-lasso procedure in \cite{bemporad_efficient_2025} for each variable separately. For each variable, we train with a given regularization weight according to \eqref{eq:group-lasso}, 
then we remove unnecessary dimensions 
via thresholding, and retrain with the reduced dimensions (without regularization). This process is repeated with different $\rho$ values for $x_\mathrm{a}$, $w_\mathrm{a}$, and $z_\mathrm{a}$ until a suitable trade-off between complexity and accuracy is achieved. The results in Fig.~\ref{fig:group-lasso} show that the method, under appropriate regularization weights, recovers the model dimensions from Section~\ref{sec:augm_struct_discovery_example}. In particular, $n_\mathrm{x_a}=2$ leads to overparametrization, while eliminating these redundant variables (setting $n_\mathrm{x_\mathrm{a}}=0$) slightly improves performance. The same can be seen for $n_\mathrm{w_a}$, but interestingly, 
increasing $n_\mathrm{z_a}$ beyond the minimal value does not degrade performance, as presumably, the redundant dimensions improve training of the learning component. Moreover, removing the learning component (setting either $n_\mathrm{w_a}$ or $n_\mathrm{z_a}$ as zero) also yields reasonably accurate models due to the linear components ($A$, $B_\mathrm{u}$, etc.) capturing the viscous effects in $F_\mathrm{fric}$. 
This is an important aspect for applications where less complex control-oriented models are preferred.
\begin{figure*}
    \centering
    \includegraphics{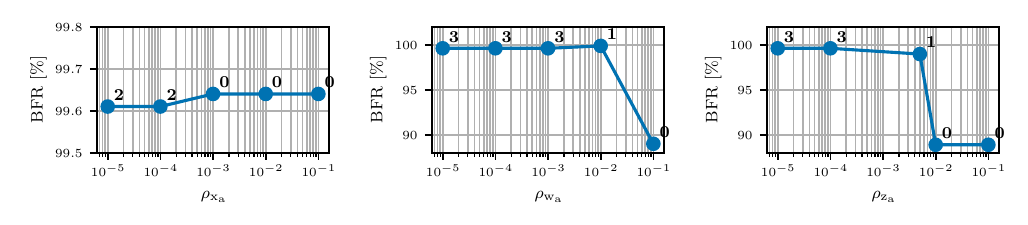}
    \vspace{-18pt}
    \caption{BFR on the test data with different $\rho_\mathrm{x_a}$ (left), $\rho_\mathrm{w_a}$ (center), $\rho_\mathrm{z_a}$ (right) values for group-lasso regularization. The resulting variable dimensions (i.e., $n_\mathrm{x_a}$, $n_\mathrm{w_a}$, and $n_\mathrm{z_a}$) are also shown next to each plot entry.}\vspace{-2mm}
    \label{fig:group-lasso}
\end{figure*}

%% file: sections/7_conclusion.tex
\vspace{-12pt}
\section{Conclusion}\label{sec:conclusion}\vspace{-12pt}
The paper has presented two direct parametrization approaches for a recently proposed general LFR-based model augmentation structure. The first method ensures the well-posedness of the structure while imposing significantly fewer restrictions on the interconnection between the baseline and learning components than existing approaches, thereby increasing the generality of the approach in representing model augmentation structures. The second parametrization guarantees the stability of the resulting models, in addition to well-posedness. Furthermore, we adapted a computationally efficient identification pipeline based on the L-BFGS-B optimizer to enable the use of non-smooth cost function terms for augmentation structure discovery and model order selection. An extensive simulation study and a benchmark identification example both demonstrated the flexibility of the proposed well-posed parametrization and the guaranteed stability of the contracting LFR-based structure. Moreover, we have also highlighted that the adapted identification pipeline can efficiently handle different regularization options, including $\ell_1$ and group-lasso regularization.\vspace{-12pt}

%% file: sections/appendix.tex
\appendix
\section{Proof of Lemma~\ref{lemma:LTI-Lipschitz}}\label{appendix:lemma1}\vspace{-12pt}
First, reformulate the LTI state transition function, as
{\setlength{\abovedisplayskip}{6pt}
 \setlength{\belowdisplayskip}{6pt}
\begin{equation}
    f^\mathrm{FP}_{\theta_\mathrm{b}}(z_\mathrm{b}(k)) = \underbrace{\begin{bmatrix}
        A^\mathrm{FP}_{\theta_\mathrm{b}} & B^\mathrm{FP}_{\theta_\mathrm{b}}
    \end{bmatrix}}_{M_{\theta_\mathrm{b}}} z_\mathrm{b}(k).
\end{equation}
}\noindent
Then, the Lipschitz constant of the linear map $M_{\theta_\mathrm{b}}$: 
{\setlength{\abovedisplayskip}{6pt}
 \setlength{\belowdisplayskip}{6pt}
\begin{equation}
    \sup_{a\neq b} \dfrac{\left\|M_{\theta_\mathrm{b}}a - M_{\theta_\mathrm{b}}b\right\|_2}{\left\|a - b\right\|_2} = \sup_{\xi\neq 0}\dfrac{\left\|M_{\theta_\mathrm{b}}\xi\right\|_2}{\left\|\xi\right\|_2} = \left\|M_{\theta_\mathrm{b}}\right\|_2,
\end{equation}
}\noindent
i.e., $\mathrm{Lip}\{f^\mathrm{FP}_{\theta_\mathrm{b}}\} =\|M_{\theta_\mathrm{b}}\|_2$, and the same can be derived for~$h^\mathrm{FP}_{\theta_\mathrm{b}}$ as well.\placeqed
\vspace{-12pt}